\documentclass[journal]{IEEEtran}

\usepackage{latexsym}
\usepackage{amssymb}
\usepackage{bm}
\usepackage[dvips]{graphics}
\usepackage{graphicx}
\usepackage{subfigure}
\usepackage{setspace}
\usepackage{psfrag}
\usepackage{amsfonts,amsmath,amssymb,hyperref,amsthm}
\usepackage{algorithm}
\usepackage{algorithmic}
\usepackage{dsfont,color,subfigure,setspace,psfrag,cite}



\usepackage{xspace}
\usepackage{bbm}

%
%
%
%
%
%
%
%
%
%
%
%
%
%

\newcommand{\Ac}{\mathcal{A}}
\newcommand{\Bc}{\mathcal{B}}
\newcommand{\Cc}{\mathcal{C}}

\newcommand{\Ec}{\mathcal{E}}

\newcommand{\Pc}{\mathcal{P}}

\newcommand{\Sc}{\mathcal{S}}

\newcommand{\Uc}{\mathcal{U}}

\newcommand{\uh}{{\hat{u}}}


\def\a{\alpha}

\def\e{\epsilon}

\DeclareMathOperator\E{E}
\let\P\relax
\DeclareMathOperator\P{P}





\newcommand\ie{i.e.,\xspace}
\def\textiid{i.i.d.\@\xspace}
\newcommand\iid{\ifmmode\text{ i.i.d. } \else \textiid \fi}

\newcommand{\ind}{\mathbbmss{1}}


\newtheorem{remark}{Remark}

\newtheorem{theorem}{Theorem}
\newtheorem{lemma}{Lemma}

\begin{document}

\title{Outage Performance of Uplink  Two-tier Networks Under Backhaul Constraints}

\author{Shirin Jalali,  Zolfa Zeinalpour-Yazdi and H. Vincent Poor
      \thanks{S. Jalali is with the Department of   Electrical  Engineering, Princeton university, NJ 08540 (e-mail: sjalali@princeton.edu),}
      \thanks{Z. Zeinalpour-Yazdi is with the Department of Electrical and computer Engineering, Yazd University, Yazd, Iran (e-mail: zeinalpour@yazd.ac.ir),}
       \thanks{H. V. Poor is with the Department of   Electrical  Engineering, Princeton university, NJ 08540 (e-mail: poor@princeton.edu).}}

 \maketitle
 \newcommand{\p}{\mathds{P}}
\newcommand{\Lc}{\mathcal{L}}
\newcommand{\mb}{\mathbf{m}}
\newcommand{\bb}{\mathbf{b}}
\newcommand{\Xb}{\mathbf{X}}
\newcommand{\Yb}{\mathbf{Y}}
\newcommand{\Ub}{\mathbf{U}}
\newcommand{\La}{\Lambda}
\newcommand{\su}{\underline{s}}
\newcommand{\xu}{\underline{x}}
\newcommand{\yu}{\underline{y}}
\newcommand{\Xu}{\underline{X}}
\newcommand{\Yu}{\underline{Y}}
\newcommand{\Uu}{\underline{U}}
\newcommand{\ex}{{\rm e}}
\newcommand{\SIR}{{\rm SIR}}

\vspace{-4em}
\begin{abstract}
Multi-tier cellular communication networks constitute a promising approach  to expand the coverage of cellular networks and enable them to offer  higher data rates. In this paper, an uplink two-tier communication network is studied, in which macro users, femto users  and femto access points are geometrically  located inside the coverage area of a macro base station according to Poisson point processes. Each femtocell is assumed to have a fixed backhaul constraint that puts a limit on the maximum number of femto and macro users it can service. Under this backhaul constraint, the network adopts a special open access policy, in which each macro user is either assigned to its closest femto access point or to the macro base station, depending on the ratio between its distances from those two. Under this model,  upper and lower bounds on the outage probabilities experienced by users serviced by femto access points are derived as functions of the distance between the macro base station and the femto access point serving them. Similarly, upper and lower bounds on the outage probabilities of the users serviced by the macro base station are obtained.  The bounds in both cases are confirmed via simulation results.

\end{abstract}

\begin{keywords}
Heterogeneous networks, Backhaul constraint, Uplink communication, Outage, Open access policy
\end{keywords}

\section{Introduction}
Fourth generation (4G) mobile communication standards such as LTE-advanced promise very high data rates. Enabling multi-tier  networks is one of the methods that enables such standards to address the ever-increasing  demand for higher data rates in cellular communication networks. In a multi-tier network, unlike the traditional design, multiple layers of cells, each  serviced by a different type of base station, are employed simultaneously. In two-tier femtocell networks, for example, in addition to the traditional base stations, there are femto access points (FAPs)  installed by users in their homes or offices. These additional base stations are connected to the cellular network through the users' broadband Internet connections. These  FAPs expand the coverage of the main network to indoors and also reduce its load. However, the limited capacities of users broadband connections impose a backhaul constraint that limits  the number of simultaneous users each femto cell can cover.

In this paper we study the outage performance of a two-tier uplink femtocell network. Macro users (MUs),  femto users (FUs) and FAPs are assumed to be spatially distributed according to  Poisson point processes (PPPs) \cite{HaenggiA:09}.  Each femtocell is assumed to have a limited backhaul capacity. Up to its capacity, each FAP  employs a special \emph{open access policy}, studied in \cite{CheungQ:12} and \cite{JoS:12} for downlinks. Based on this policy, each MU is serviced by its closest FAP if i) the ratio between its distance to  its closet FAP and its distance to the MBS exceeds some threshold, and ii) the number of users already being serviced by that FAP is less than its capacity.
\vspace{-.5em}
\subsection{Related work}

PPPs were originally suggested in \cite{BaccelliZ:97,BaccelliK:97,Brown:00} as a more tractable and realistic model for the locations of cells and users in a wireless network. The outage performance of two-tier networks  under PPP distribution of users or access  points is studied in \cite{DhillonG:12, Mukherjee:12,WangQ:12,YuM:12} and in \cite{ChandrasekharA:09, ChakchoukH:12,YuM:12,BaoL:13,BaoL:13-acm,ZeinalpourJ:14-tcom,ElSawyH:14} for downlink and uplink communications, respectively. In none of these papers are the FAPs' backhaul constraints taken into account. In fact,  to our knowledge, while there have been studies of the effects of  femtocell backhaul constraints on other aspects of networks, there has been no prior analytical work on their effects on the users' outage performance in a two-tier network. (Refer to \cite{XiaC:10,ElkourdiS:11, NgL:12,LoumiotisA:14} as a sample of some recent results.) In this paper, we extend the analysis of uplink tow-tier networks presented in \cite{ZeinalpourJ:14-tcom} to the case in which each FAP  has a backhaul constraint that limits the number of users it can service.  We derive  analytical upper and lower bounds on the outage probabilities experienced by the users serviced by the FAPs.
\vspace{-.8em}
\subsection{Notation}

Sets are denoted by calligraphic letters such as $\Ac$ and $\Bc$. The size of a set $\Ac$ is denoted by $|\Ac|$. The Laplace transform of random variable $X$ is denoted by $\Phi_{X}(s)\triangleq \E[\ex^{-sX}]$. Given $x\in\mathds{R}$, $(x)^+\triangleq\max(x,0)$.  Throughout the paper, $\Pc(s,x)$ denotes the  cumulative distribution function of a gamma random variable with shape parameter $s$ and scale parameter $1$. Given a Poisson random variable $X$ with parameter $\lambda$,  $\P(X\leq k)=\ex^{-\lambda}\sum_{i=0}^k{\lambda^k\over k!}=1-\Pc(k+1,\lambda).$
\vspace{-.8em}
\subsection{Paper organization}
The paper is organized as follows. Section \ref{sec:model} reviews the system model including the employed modulation, users and FAPs spatial distributions, and the access policy. The distributions of number of  users falling into different service groups  are studied in Section \ref{sec:dist}. Section \ref{sec:MU-by-FAP} studies  the outage probability experienced by the MUs serviced by FAPs. Similarly, Section \ref{sec:MU-by-MBS} analyzes the outage probability experienced by the MUs serviced by the MBS.  Section \ref{sec:num} presents  numerical results and, finally, Section \ref{sec:con} concludes the paper.
\vspace{-.8em}
\section{System model}\label{sec:model}
\subsection{MCFH technique}
Both macro and femto users are assumed to employ multicarrier frequency-hopping  (MCFH) modulation introduced in \cite{LanceK:97}. In MCFH the available bandwidth is divided into $n_s$ non-overlapping  subbands and each subband  is divided into  $n_h$  equispaced frequencies, respectively. Hence,  there are overall  $n_sn_h$ available orthogonal \emph{subchannels}.  During each time slot, each user   selects  $n_s$ subchannels by independently and uniformly at random choosing one subchannel from each subband. While MCFH modulation is very similar to  orthogonal frequency devision modulation (OFDM), unlike OFDM it does not require  centralized frequency assignment. Hence, while, with some minor adjustments,  the results derived under this modulation are also applicable to networks employing OFDM, MCFH modulation is much better suited for analytical performance studies.

%
\subsection{Spatial distribution}

Consider MBS $b_m$ located at the center of a circle of radius $R$ denoted by $\Sc_m$.  $\Ac_f$, $\Uc_m$ and $\Uc_f$ denote the set of FAPs, MUs and FUs, respectively. Conditioned on the locations of the FAPs $\Ac_f$, FUs and MUs are distributed according to independent PPPs.  FAPs and MUs are drawn according to PPPs of densities  $\lambda_f$ and $\mu_m$, respectively. Let   $\Uc_m$, $N_m=|\Uc_m|$, and $\bar{n}_{\rm mu}=\E[N_m]=\pi R^2 \mu_m$ denote the set of MUs in $\Sc_m$, the number of MUs and the expected number of MUs, respectively.
Similarly, let $\Ac_f$, $N_{a_f}=|\Ac_f|$, and $\bar{n}_{\rm fap}=\E[N_{a_f}]=\pi R^2 \lambda_f$ denote the set of FAPs in $\Sc_m$, the number of FAPs and the expected number of FAPs, respectively.
The FUs corresponding to each FAP $a_f\in\Ac_f$ are distributed according to a PPP with density $\mu_f$ in a disk of width $\delta$ and inner radius of $r_f$ centered at $a_f$. By this construction, the  expected number of FUs served by a femto cell is  equal to  $\bar{n}_{\rm fu}=\pi ((r_f+\delta)^2-r_f^2)\mu_f$.

Given FAP $a_f\in\Ac_f$, $\Uc_f(a_f)$ and $\Uc_m(a_f)$ denote the set of FUs and MUs, respectively, that are serviced by $a_f$.
Also  $N_f^{a_f}\triangleq |\Uc_f(a_f)|$ and $N_m^{a_f}\triangleq |\Uc_m(a_f)|$.  Finally, $\Uc_m(b_m)$ denotes the set of MUs serviced by the MBS $b_m$. Clearly, $\Uc_m=\cup_{a\in\Ac_f\cup\{b_m\}}\Uc_m(a).$
The number of MUs  covered by the MBS $b_m$ is denoted by $N_m^{b_m}$, \ie $N_m^{b_m}\triangleq |\Uc_m(b_m)|$. Note that, by definition,  $N_m=N_m^{b_m}+\sum_{a_f\in\Ac_f}N_m^{a_f}$.

\subsection{Access policy and backhaul constraint}\label{section:access}

We consider the open access scenario with  access parameter $\kappa\in[0,1]$, studied in \cite{CheungQ:12} for  downlink communications and in \cite{ZeinalpourJ:14-tcom} for uplink transmission, when the FAPs have no backhaul constraints. Let $d(u_m,a)$ denote the Euclidean distance between the (femto or macro) access point $a$ and $u_m$. Then, in this access model an MU is served by its nearest FAP $a_f$ if ${d(u_m,a_f) \over d(u_m,b_m)}$ is less than $\kappa$ and the backhaul constraint is not violated; otherwise it is served by the MBS.

To model the backhaul constraints, we assume that each FAP has access to a  fixed broadband capacity, which translates into covering at most $n_c$ users. The priority is always given to FUs. Once all FUs are serviced, if there is some remaining unused  capacity, it can be  allocated to  MUs. MU $u_m$ is potentially assigned to FAP $a_f$, if $d(u_m,a_f)\leq \kappa d(u_m,b_m)$. If there are more than one FAPs satisfying this condition, $u_m$  considers only  the closest one.  From all potential MUs  of an FAP $a_f$ with $N_f^{a_f}$ FUs, $a_f$  randomly chooses  up to $n_c-N_f^{a_f}$ of them to serve.
It is reasonable to assume that $n_c\geq \bar{n}_{\rm fu}$, or in other words,   the capacity of each FAP is at least as large as  the expected number of FUs in that cell.

In this model, due to the backhaul constraint, an MU can get arbitrarily close to an FAP $a_f$, and yet be serviced by the MBS. To avoid the arbitrarily large interference caused by such cases, we assume that, for any MU $u_m$, the ratio between its distances from any FAP $a_f$ and the MBS, \ie $d(u_m,a_f)/d(u_m,b_m)$, cannot be smaller than some threshold $\kappa_o$, where $\kappa_o\ll \kappa$. As argued in \cite{ZeinalpourJ:14-tcom}, this means that for an FAP $a_f$ located at distance $d$ from $b_m$, there exists a circle of radius ${\kappa_o\over 1-\kappa_o^2}d$ that includes $a_f$, where no MUs are allowed. In general, we can assume that $k_o$ depends on $d$, and  as a special case tune it such that the excluded circle of all FAPs have the same radius. While our analysis can be generalized to this case in a straightforward manner, to simplify the statement of the results, we assume that $k_o$ is fixed for all FAPs.

%

\vspace{-.5em}
\subsection{Channel Model}

To model the channel between user $u$ and access point $a$, $a \in \{b_m,a_f\}$, both small scale fading and path loss are considered. So it is assumed that the fading coefficients corresponding to the channel in subband $i\in [1:n_s]$ from user $u$ to  $a$, $H^i_{u,a}$, follows the Rayleigh distribution with parameter $\sigma^2$. Furthermore, we assume that the coefficients corresponding to different subbands and also different channels are all independent. The path loss is modeled as $\texttt{PL}_{u,a}=L_0 d_{u,a}^{\alpha},$
where $L_0$ is the path loss at unit distance, and $\alpha>2$ denotes the attenuation factor \cite{JoS:12}.

In this paper,  we assume that every user  employs power control to compensate for the effect of path loss. By power control, MUs serviced by the MBS intend to achieve a
received power level of $p_m$, and FUs and MUs serviced by  FAPs adjust their transmitted powers to achieve a received power  of  $p_f$.

\section{Users density distribution}\label{sec:dist}

\begin{figure}[b]
\vspace{-1.5em}
\begin{center}
\psfrag{bm}{ $b_m$}
\psfrag{af}{ $a_f$}
\psfrag{d}{ $r$}
\psfrag{r}{ $\hspace{-0.14cm}r_c$}
\includegraphics[width=5.5cm]{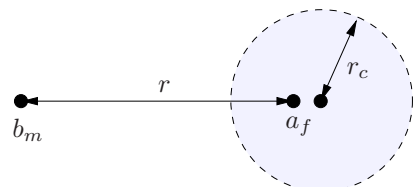}\caption{MUs served by  $a_f$ located at distance $r$ from  $b_m$.}\label{fig:MU-covered-FAP}\end{center}
\end{figure}

In this section, we study the distributions of the random variables $N_f^{a_f}$, $N_m^{a_f}$,  $N_m^{b_m}$ and $N_m$. As argued in \cite{ZeinalpourJ:14-tcom}, given FAP $a_f$ at distance $r$ from $b_m$, the set of points satisfying $d(u_m,a_f)\leq \kappa d(u_m,b_m)$ is the set of points inside a circle of radius  $r_c=({\kappa\over 1-\kappa^2})r$. (Refer to   Fig.~\ref{fig:MU-covered-FAP}.) The distance between the center of this circle and $b_m$ is equal to ${r\over 1-\kappa^2}$. For $\kappa\in(0,1)$, ${\kappa\over 1-\kappa^2}$ is an increasing function of $\kappa$, which implies that increasing $\kappa$ translates into increasing the coverage area of an FAP. As a special case, when $\kappa=0$, the FAP only covers FUs, and hence has a closed access policy.

For MUs serviced by FAPs, the potential coverage area of FAP $a_f$ located at distance $d$ from $b_m$  is a circle of radius $({\kappa\over 1-\kappa^2})d$. Let $\Uc_m(a_f)$ denote the  MUs that fall in the coverage area of FAP $a_f$.  Due to the backhaul constraint,  not all the MUs falling in $\Uc_m(a_f)$ can be serviced by $a_f$. Therefore, they can be partitioned into two groups, $\Uc_m^{\rm s}(a_f)$ and $\Uc_m^{\rm ns}(a_f)$, representing the   MUs that are serviced by $a_f$ and the MUs that fall in the coverage area of   $a_f$, but are serviced by $b_m$, respectively. Let   $N_{m,s}^{a_f}\triangleq |\Uc_m^{\rm s}(a_f)|$ and $N_{m,ns}^{a_f}\triangleq |\Uc_m^{\rm ns}(a_f)|$. Stochastically,
\[
N_{m,s}^{a_f}=\min(N_1,(n_c-N_2)^+),
\]
and
\[
N_{m,ns}^{a_f}=N_1-N_{m,s}^{a_f}=(N_1-(n_c-N_2)^+)^+,
\]
where $N_1$ and $N_2$ are independent and distributed as $ {\rm Poiss}(\bar{n}_{\rm mu}^{d})$ with
\begin{align}
\bar{n}_{\rm mu}^{d}\triangleq \pi \mu_m( ({\kappa\over 1-\kappa^2})^2-({\kappa_o\over 1-\kappa_o^2})^2)d^2\label{eq:n-bar-mu-d}
\end{align}
and $ {\rm Poiss}(\bar{n}_{\rm fu})$, respectively.
\begin{lemma}\label{lemma:1}
The Laplace transform of $N_{m,ns}^{a_f}$, the number of MUs that  fall in the coverage area of  FAP $a_f$ located at distance $d$ from the MBS $b_m$ but  serviced by $b_m$, satisfies
\begin{align*}
\Phi_{N_{m,ns}^{a_f}}(s|d_f)\;\leq\; &1-\Pc(n_c,\bar{n}_{\rm fu})+\ex^{\bar{n}_{\rm mu}^d (\ex^{-s}-1)}\Pc(n_c,\bar{n}_{\rm fu}).
\end{align*}
\end{lemma}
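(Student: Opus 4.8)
The plan is to evaluate $\Phi_{N_{m,ns}^{a_f}}(s)=\E[\ex^{-sN_{m,ns}^{a_f}}]$ by conditioning on the femto-user count $N_2$, exploiting the observation that the spillover $N_{m,ns}^{a_f}=\big(N_1-(n_c-N_2)^+\big)^+$ interacts with $N_2$ only through whether the backhaul capacity is already saturated by the FAP's own FUs. Because $N_1$ and $N_2$ are independent, conditioning on $N_2=j$ leaves $N_1\sim{\rm Poiss}(\bar n_{\rm mu}^d)$ unchanged, so I would start from
\[
\Phi_{N_{m,ns}^{a_f}}(s)=\sum_{j=0}^{\infty}\P(N_2=j)\,\E\!\left[\ex^{-s\,(N_1-(n_c-j)^+)^+}\right].
\]

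I would then split the sum at the saturation threshold $j=n_c$. For $j\ge n_c$ one has $(n_c-j)^+=0$, hence $\big(N_1-(n_c-j)^+\big)^+=N_1$, and the inner expectation reduces to the ordinary Poisson Laplace transform $\Phi_{N_1}(s)=\ex^{\bar n_{\rm mu}^d(\ex^{-s}-1)}$. For $j<n_c$ the random exponent $\big(N_1-(n_c-j)\big)^+$ is nonnegative, so for $s\ge 0$ the integrand is bounded by $1$ and the corresponding conditional expectation is at most $1$. This single relaxation is what converts the exact identity into the stated inequality: the $j<n_c$ terms are truncated Poisson Laplace transforms with no convenient closed form, and replacing each by its trivial upper bound $1$ is precisely the step that produces the bound.

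Assembling the two ranges gives
\[
\Phi_{N_{m,ns}^{a_f}}(s)\;\le\;\P(N_2<n_c)+\P(N_2\ge n_c)\,\ex^{\bar n_{\rm mu}^d(\ex^{-s}-1)},
\]
after which it remains only to translate the two Poisson tail probabilities into the gamma-CDF notation fixed earlier. Applying $\P(X\le k)=1-\Pc(k+1,\lambda)$ with $X=N_2$, $\lambda=\bar n_{\rm fu}$ and $k=n_c-1$ yields $\P(N_2\ge n_c)=\Pc(n_c,\bar n_{\rm fu})$ and $\P(N_2<n_c)=1-\Pc(n_c,\bar n_{\rm fu})$, which is exactly the claimed bound.

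There is no genuine analytic obstacle here; the argument is a clean conditioning computation whose only nontrivial decision is where to introduce the inequality. The point worth checking is that the bound by $1$ is applied on the correct event, namely the unsaturated event $\{N_2<n_c\}$ on which the conditional transform resists closed-form evaluation, while the saturated event $\{N_2\ge n_c\}$ retains the exact Poisson transform. It is also reassuring that the bound collapses to an equality in the degenerate saturated case $n_c=0$, where every covered MU necessarily spills over to the MBS, so $N_{m,ns}^{a_f}=N_1$ identically.
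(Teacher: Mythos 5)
Your proof is correct and follows essentially the same route as the paper: condition on whether $N_2$ saturates the capacity, keep the exact Poisson transform $\ex^{\bar n_{\rm mu}^d(\ex^{-s}-1)}$ on $\{N_2\ge n_c\}$ (using independence of $N_1$ and $N_2$), bound the conditional expectation by $1$ on $\{N_2<n_c\}$, and convert the tail probability to $\Pc(n_c,\bar n_{\rm fu})$. Your finer conditioning on $N_2=j$ and the sanity check at $n_c=0$ are harmless refinements of the paper's two-event split, not a different argument.
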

\begin{proof}
By definition, the number of MUs  in  $\Uc_m^{\rm ns}(a_f)$ can be written as $|\Uc_m^{\rm ns}(a_f)|=N_1-|\Uc_m^{\rm s}(a_f)|=(N_1-(n_c-N_2)^+)^+$,
where $N_1$ and $N_2$ are independent and distributed as $ {\rm Poiss}(\bar{n}_{\rm mu}^{d})$  and  $ {\rm Poiss}(\bar{n}_{\rm fu})$, respectively.
Therefore,
\begin{align}
\Phi_{|\Uc_m^{\rm ns}(a_f)|}(s|d_f)= &\E\left[\ex^{-s |\Uc_m^{\rm ns}(a_f)|}|{N_2 < n_c}\right]\P(N_2< n_c)\nonumber\\
&+\E\left[\ex^{-s |\Uc_m^{\rm ns}(a_f)|}|{N_2\geq n_c}\right]\P(N_2 \geq n_c)\nonumber\\
=&\E\left[\ex^{-s |\Uc_m^{\rm ns}(a_f)|}|{N_2 < n_c}\right]\P(N_2< n_c)\nonumber\\
&+\E\left[\ex^{-s N_1}\right]\P(N_2 \geq n_c),\label{eq:Nm-nc-Phi-step1}
\end{align}
where the last line follows from the independence of $N_1$ and $N_2$. Since $\E[\ex^{-s |\Uc_m^{\rm nc}(a_f)|}|{N_2 < n_c}]\leq 1$, from \eqref{eq:Nm-nc-Phi-step1},
\begin{align*}
\Phi_{|\Uc_m^{\rm nc}(a_f)|}(s|d_f)&\leq \P(N_2< n_c)+\E\left[\ex^{-s N_1}\right]\P(N_2 \geq n_c)\nonumber\\
&=1-\Pc(n_c,\bar{n}_{\rm fu})+\ex^{\bar{n}_{\rm mu}^d (\ex^{-s}-1)}\Pc(n_c,\bar{n}_{\rm fu}).
\end{align*}
\end{proof}

\begin{lemma}\label{lemma:2}
Let  $\gamma=({\kappa\over 1-\kappa^2})^2-({\kappa_o\over 1-\kappa_o^2})^2$,  and $
\beta \triangleq \Pc(n_c,\bar{n}_{\rm fu} )+ (1-\Pc(n_c,\bar{n}_{\rm fu}) )({\ex^{ (\ex^{s}-1) \gamma \bar{n}_{\rm mu}} -1\over (\ex^{s}-1) \gamma \bar{n}_{\rm b_m}}).$
 The Laplace transform of $N_m^{b_m}$, $\Phi_{N_m^{b_m}}(s)$,  satisfies the following lower and upper bounds:
\[
\Phi_{N_m^{b_m}}(s)\geq \ex^{(\ex^{-s}-1)\bar{n}_{\rm mu}} ,
\]
 and
\[
\Phi_{N_m^{b_m}}(s)\leq \ex^{(\ex^{-s}-1)\bar{n}_{\rm mu} +(\beta-1)\bar{n}_{\rm fap}}.
\]
\end{lemma}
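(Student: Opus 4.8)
The lower bound is immediate and requires no distributional computation. Since every MU is served either by the MBS $b_m$ or by some FAP, we have deterministically $N_m^{b_m}\le N_m$; and for $s\ge 0$ the map $x\mapsto \ex^{-sx}$ is non-increasing, so $\ex^{-sN_m^{b_m}}\ge \ex^{-sN_m}$ pointwise. Taking expectations and using $N_m\sim{\rm Poiss}(\bar{n}_{\rm mu})$ gives $\Phi_{N_m^{b_m}}(s)\ge \Phi_{N_m}(s)=\ex^{(\ex^{-s}-1)\bar{n}_{\rm mu}}$, which is exactly the stated lower bound.

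For the upper bound the plan is to condition on the FAP configuration $\Ac_f$ and exploit the PPP structure of the MUs. Given $\Ac_f$, assign each MU to its closest qualifying FAP, so that the disk $\Sc_m$ splits into the region $\Cc_0$ covered by no FAP and the disjoint effective coverage regions of the individual FAPs. An MU is served by $b_m$ precisely when it lies in $\Cc_0$, or it lies in some $a_f$'s region but is refused for lack of backhaul; the latter count is $N_{m,ns}^{a_f}$. On disjoint regions the MU counts are independent Poisson, and the FU counts $N_f^{a_f}$ governing the backhaul are independent across FAPs and of the MUs, so the conditional Laplace transform factorizes:
\[
\Phi_{N_m^{b_m}}(s\mid\Ac_f)=\ex^{(\ex^{-s}-1)\mu_m|\Cc_0|}\prod_{a_f\in\Ac_f}\Phi_{N_{m,ns}^{a_f}}(s).
\]
Writing $\mu_m|\Cc_0|=\bar{n}_{\rm mu}-\sum_{a_f}\bar{n}_{\rm mu}^{d_{a_f}}$ (bounding the overlap of the coverage disks by a union bound, which, since $\ex^{-s}-1\le 0$, only strengthens the upper bound), I would pull out the factor $\ex^{(\ex^{-s}-1)\bar{n}_{\rm mu}}$ and collect the remaining per-FAP contribution into
\[
g(d)\triangleq \ex^{(1-\ex^{-s})\bar{n}_{\rm mu}^{d}}\,\Phi_{N_{m,ns}^{a_f}}(s),
\]
so that $\Phi_{N_m^{b_m}}(s\mid\Ac_f)\le \ex^{(\ex^{-s}-1)\bar{n}_{\rm mu}}\prod_{a_f}g(d_{a_f})$. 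Deconditioning via the probability generating functional of the FAP PPP (density $\lambda_f$ on $\Sc_m$) converts the product into $\exp(\lambda_f\int_{\Sc_m}(g(|x|)-1)\,dx)=\exp((\bar g-1)\bar{n}_{\rm fap})$, where $\bar g$ is the average of $g$ against the area-uniform distance density $2d/R^2$ on $[0,R]$.

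The final step is to bound $g$ pointwise using Lemma~\ref{lemma:1}: substituting that upper bound and cancelling $\ex^{(1-\ex^{-s})\bar{n}_{\rm mu}^{d}}$ against the $\ex^{(\ex^{-s}-1)\bar{n}_{\rm mu}^{d}}$ appearing in the second term leaves $g(d)\le \Pc(n_c,\bar{n}_{\rm fu})+(1-\Pc(n_c,\bar{n}_{\rm fu}))\ex^{(1-\ex^{-s})\bar{n}_{\rm mu}^{d}}$. Averaging over $d$ with $\bar{n}_{\rm mu}^{d}=\pi\mu_m\gamma d^2$ and $\pi\mu_m R^2=\bar{n}_{\rm mu}$ reduces to an elementary integral in the variable $u=d^2$, producing exactly the constant $\beta$, i.e.\ $\bar g\le\beta$; since $x\mapsto \ex^{(x-1)\bar{n}_{\rm fap}}$ is increasing, this yields the claimed upper bound. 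The main obstacle is the passage from the per-FAP factors to the global bound: one must (i) justify the factorization despite overlapping coverage disks — handled by the closest-FAP assignment, so the effective regions are disjoint, together with the union bound on $|\Cc_0|$ taken in the direction compatible with $\ex^{-s}-1\le 0$ — and (ii) check that, because $\Phi_{N_{m,ns}^{a_f}}$ is itself only available through the Lemma~\ref{lemma:1} upper bound, the pointwise inequality on $g$ is used with the correct sign so that it survives both the PGFL integration and the final exponentiation.
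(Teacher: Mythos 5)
Your lower bound is exactly the paper's argument ($N_m^{b_m}\leq N_m$ plus monotonicity of $x\mapsto \ex^{-sx}$), so that part is fine. For the upper bound you take a genuinely different route: the paper never touches the geometry of the coverage regions. It writes $N_m^{b_m}\geq N_m-\sum_{a_f\in\Ac_f}N^{a_f}$ with $N^{a_f}=\min(N_1^{a_f},(n_c-N_2^{a_f})^+)$, bounds the moment generating function $\E[\ex^{sN^{a_f}}]\leq \Pc(n_c,\bar{n}_{\rm fu})+(1-\Pc(n_c,\bar{n}_{\rm fu}))\E[\ex^{sN_1^{a_f}}]$ using $N^{a_f}\leq N_1^{a_f}$ and independence, and then averages over $d^2(a_f,b_m)$ and over the Poisson number of FAPs; this is why $\beta$ carries $\ex^{s}-1$ and why no overlap issue ever arises there. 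Your spatial/PGFL decomposition, by contrast, has a genuine gap at precisely the step you yourself flagged as ``the main obstacle.'' Under the closest-FAP assignment, the refused count attached to $a_f$ is governed by the \emph{effective} region, whose MU mean $m_{a_f}$ is in general strictly smaller than $\bar{n}_{\rm mu}^{d}$ (overlapping disks, truncation at the boundary of $\Sc_m$). A smaller mean makes the refused count stochastically smaller and hence its Laplace transform \emph{larger}, so the Lemma~\ref{lemma:1} bound evaluated at the full-disk mean $\bar{n}_{\rm mu}^{d}$ need not dominate it. Concretely, abbreviating $\Pc(n_c,\bar{n}_{\rm fu})$ by $\Pc$, the effective-mean version of Lemma~\ref{lemma:1} gives $1-\Pc+\Pc\,\ex^{m_{a_f}(\ex^{-s}-1)}$, and multiplying by your compensator $\ex^{(1-\ex^{-s})\bar{n}_{\rm mu}^{d}}$ leaves $(1-\Pc)\ex^{(1-\ex^{-s})\bar{n}_{\rm mu}^{d}}+\Pc\,\ex^{(1-\ex^{-s})(\bar{n}_{\rm mu}^{d}-m_{a_f})}$, whose second term exceeds $\Pc$; the clean cancellation $g(d)\leq \Pc+(1-\Pc)\ex^{(1-\ex^{-s})\bar{n}_{\rm mu}^{d}}$ that your averaging relies on does not follow. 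The repair is to drop the union bound and use the exact identity $\mu_m|\Cc_0|=\bar{n}_{\rm mu}-\sum_{a_f}m_{a_f}$, cancel exactly against the effective-mean bound to get the per-FAP factor $\Pc+(1-\Pc)\ex^{(1-\ex^{-s})m_{a_f}}$, and only then invoke monotonicity of this expression in $m_{a_f}\leq \bar{n}_{\rm mu}^{d}$.

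Separately, your averaging does not ``produce exactly the constant $\beta$'': carried through, it yields $\Pc+(1-\Pc)\bigl(\ex^{(1-\ex^{-s})\gamma\bar{n}_{\rm mu}}-1\bigr)/\bigl((1-\ex^{-s})\gamma\bar{n}_{\rm mu}\bigr)$, i.e.\ $1-\ex^{-s}$ where $\beta$ has $\ex^{s}-1$. The origin of the discrepancy is that you bound the Laplace transform of the \emph{refused} count and re-exponentiate, whereas the paper bounds the MGF of the \emph{served} count at $+s$. Fortunately the slip is in the favorable direction: $c\mapsto(\ex^{c\gamma\bar{n}_{\rm mu}}-1)/(c\gamma\bar{n}_{\rm mu})$ is increasing and $1-\ex^{-s}\leq \ex^{s}-1$ for $s\geq0$, so your constant is at most $\beta$, and monotonicity of $x\mapsto\ex^{(x-1)\bar{n}_{\rm fap}}$ then still delivers the stated bound (indeed a marginally sharper one) once the cancellation gap above is fixed. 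So the architecture is salvageable and even slightly improves on the paper, but as written the proof is incomplete at the overlap/truncation step, and the claim that the integral equals $\beta$ is incorrect.
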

\begin{proof}
To derive the lower bound, note that  $N_m^{b_m}\leq N_m$, and therefore, for $s\geq 0$,  $\ex^{-s N_m^{b_m}}\geq \ex^{-s N_m}$. Hence, $\E[\ex^{-s N_m^{b_m}}]\geq \E[\ex^{-s N_m}]= \ex^{\bar{n}_{\rm mu}(\ex^{-s}-1)}$.

Let $N_{\rm fap}\triangleq |\Ac_f|$. Each FAP $a_f\in\Ac_f$, at most, covers $\min(N_1^{a_f},(n_c-N_2^{a_f})^+)$ MUs, where $N_1^{a_f}\sim{\rm Poiss}(\bar{n}_{\rm mu}^{d(a_f,b_m)})$ and $N_2^{a_f}\sim{\rm Poiss}(\bar{n}_{\rm fu})$, where $\bar{n}_{\rm mu}^d$ is defined in \eqref{eq:n-bar-mu-d}. Let $N^{a_f}\triangleq \min(N_1^{a_f},(n_c-N_2^{a_f})^+)$. Conditioned on $\Ac_f$, $\{N^{a_f}\}_{a_f\in\Ac_f}$ are independent and identically distributed (i.i.d.) random variables. Then, $N_m^{b_m}\geq N_m-\sum_{a_f\in\Ac_f}N^{a_f}$.
Therefore, 
\begin{align}
\Phi_{N_m^{b_m}}(s)& = \E[\ex^{-s N_m^{b_m}}]\nonumber\\
& \leq \E[\ex^{-sN_m}]\E[\ex^{s\sum_{a_f\in\Ac_f}N^{a_f}}] \nonumber\\
&= \E[\ex^{-sN_m}]\E\Big[\E[\prod_{a_f\in\Ac_f}\ex^{sN^{a_f}}|\Ac_f]\Big]\nonumber\\
&\stackrel{(a)}{=} \E[\ex^{-sN_m}]\E[(\E[\ex^{sN^{a_f}}])^{N_{\rm fap}}]\nonumber\\
&=\ex^{(\ex^{-s}-1)\bar{n}_{\rm mu} }\ex^{\bar{n}_{\rm fap}(\E[\ex^{sN^{a_f}}]-1)},\label{eq:lemma5-1}
\end{align}
where (a) follows because conditioned on $N_{\rm fap}=i$, $\{N^{a_f}\}_{a_f\in\Ac_f}$ are $i$  i.i.d. random variables. On the other hand, 
\begin{align*}
\E[\ex^{sN^{a_f}}]\;=\; & \E[\E[\ex^{sN^{a_f}}|\ind_{N_2^{a_f}\geq n_c}]]\nonumber\\
\; = \; & \Pc(n_c,\bar{n}_{\rm fu})+\E[\ex^{sN^{a_f}}|N_2^{a_f}< n_c] (1-\Pc(n_c,\bar{n}_{\rm fu}) )\nonumber\\
\; \stackrel{(a)}{\leq} \; & \Pc(n_c,\bar{n}_{\rm fu})+\E[\ex^{sN_1^{a_f}}|N_2^{a_f}< n_c] (1-\Pc(n_c,\bar{n}_{\rm fu}) )\nonumber\\
\; \stackrel{(b)}{=} \; & \Pc(n_c,\bar{n}_{\rm fu})+\E[\ex^{sN_1^{a_f}}] (1-\Pc(n_c,\bar{n}_{\rm fu}) ),
\end{align*}
where $(a)$ holds because $N^{a_f}\leq N_1^{a_f}$ and $s\geq 0$, and $(b)$ follows from the independence of  $N_1^{a_f}$ and $N_2^{a_f}$. Also,
\begin{align*}
\E[\ex^{sN_1^{a_f}}]= \E[\E[\ex^{sN_1^{a_f}}|d(a_f,b_m)]]
=  \E[\ex^{(\ex^{s}-1)\bar{n}_{\rm mu}^{d(a_f,b_m)}}],
\end{align*}
where $\bar{n}_{\rm mu}^d$ is defined in \eqref{eq:n-bar-mu-d}. But,
\begin{align}
\E[\ex^{cd^2(a_f,b_m)}]&\;=\;\int_0^R {2r\over R^2}e^{cr^2}dr={\ex^{cR^2}-1\over cR^2}.
\end{align}
\vspace{-.5em}Therefore,
\begin{align}
\E[\ex^{sN^{a_f}}] \leq \Pc(n_c,\bar{n}_{\rm fu} )+ (1-\Pc(n_c,\bar{n}_{\rm fu}) )({\ex^{ (\ex^{s}-1) \gamma \bar{n}_{\rm mu}} -1\over (\ex^{s}-1) \gamma \bar{n}_{\rm b_m}}).\label{eq:lemma5-2}
\end{align}\vspace{-.5em}
Combining \eqref{eq:lemma5-1} and \eqref{eq:lemma5-2}  yields the desired upper bound.
\end{proof}

\section{MU served by an FAP} \label{sec:MU-by-FAP}

In this section, we analyze the outage performance of an MU serviced by an FAP, in the described  uplink network with  backhaul constraints.    We  assume that  the performance of   the users is primarily limited by the interference caused by the other users of both tiers, and therefore ignore the effect of additive Gaussian noise (AWGN) in our analysis.


Consider FAP $a_f\in\Ac_f$ at distance $d$ from MBS $b_m$, \ie $d(a_f,b_m)=d$.  Given the power control assumption, the upload SIR experienced  by user $u_m\in\Uc_m(a_f)$ in subband $i\in \{1,2,\ldots, n_s\} $ is equal to 
\begin{align}
\SIR_{m,f}&= {\frac{p_f|H^i_{u_m,a_f}|^2}{n_s} \over I_{m,f}},\label{eq:SIR-m-fap}
\end{align}
\vspace{-.5em}where 
\begin{align}
I_{m,f}=&  \sum\limits_{u_f\in \Uc_f(a_f)} \frac{p_f|H^i_{u_f,a_f}|^2}{g}+\sum\limits_{\uh_m\in \Uc_m(a_f)\backslash u_m}\frac{p_f|H^i_{\uh_m,a_f}|^2}{g}\nonumber\\
&\;+\sum\limits_{\hat{a}_f\in \Ac_f\backslash a_f }\sum\limits_{u\in \Uc_m(\hat{a}_f)\cup \Uc_f(\hat{a}_f) }\Big({d(u,\hat{a}_f)\over d(u,a_f)}\Big)^{\a}\frac{p_f|H^i_{u,\hat{a}_f}|^2}{g}\nonumber
\end{align}
\begin{align}
&\;+\sum\limits_{\uh_m\in \Uc_m{(b_m)}}({d(\uh_m,b_m)\over d(\uh_m,a_f)})^{\a}\frac{p_m|H^i_{\uh_m,a_f}|^2}{g}. \label{eq:I_macro}
\end{align}
In \eqref{eq:I_macro}, from left to right, the interference terms correspond to the interference caused by the FUs of FAP $a_f$, the other  MUs of FAP $a_f$,  users of the other FAPs and the MUs serviced by the MBS, respectively.
Given FAP $\hat{a}_f\in \Ac_f\backslash a_f $, and (femto or macro) user $u\in \Uc_m(\hat{a}_f)\cup \Uc_f(\hat{a}_f)$ covered by $\hat{a}_f$,  typically $d(u,\hat{a}_f)\ll d(u,a_f)$, or ${d(u,\hat{a}_f)\over d(u,a_f)}\ll 1$. Therefore, unless the density of FAPs is very high, the effect of the interference caused by the users of other FAPs is negligible. Under this approximation, we have 
\begin{align}
I_{m,f}=& \hspace{-2.5em} \sum\limits_{u\in \Uc_f(a_f)\; \cup \;\Uc_m(a_f)\backslash u_m}\hspace{-2em}\frac{p_f|H^i_{u,a_f}|^2}{g}+\hspace{-1em}\sum\limits_{\uh_m\in \Uc_m{(b_m)}}\hspace{-1.5em}(\delta_{\uh_m})^{\a}\frac{p_m|H^i_{\uh_m,a_f}|^2}{g}, \label{eq:I-m-f-simplified}
\end{align}
where
\begin{align}
\delta_{\uh_m}\triangleq {d(\uh_m,b_m)\over d(\uh_m,a_f)}.\label{eq:delta-u}
\end{align}
%

Define the event $\Ec=\{d(a_f,b_m)=d, N_m^{a_f}\geq 1\}$. Then MU  $u_m\in\Uc_m(a_f)$  is said to experience outage in subband $i$ if $\SIR_{m,f}$ is less than some pre-determined threshold $\theta$. Therefore, the corresponding outage probability $\P_{\rm out}^{m,f}$ of MU $u_m$ serviced by FAP $a_f$ is defined as  $\P_{\rm out}^{m,f}(\theta,d_f)= \P(\SIR_{m,f}<\theta|  \Ec)$, where $\SIR_{m,f}$ is defined in \eqref{eq:SIR-m-fap}. Since $|H^i_{u_m,a_f}|^2$ has an exponential distribution and is independent of other relevant random variables, it follows that \vspace{-.5em}
\begin{align}
P_{\rm out}^{m,f}(\theta,d_f)=1-\E[\ex^{-({\theta n_s\over \sigma^2p_f})I_{m,f}}|\Ec].\label{P-out-m-f}
\end{align}
In the following two sections, we derive analytical   upper and lower bounds on  $P_{\rm out}^{m,f}$.

Before stating the bounds, given FAP $a_f$ at distance $d_f$ from $b_m$, consider partitioning   the coverage area $\Sc_m$ of the MBS $b_m$, as described in Appendix \ref{app:A}, into $2(t+1)$ regions.  To perform this partitioning  parameters $(\kappa_0,\ldots,\kappa_t)$ are selected such that  $\kappa_0=\kappa<\kappa_1<\kappa_2<\ldots<\kappa_t=1$. For user $u$ with $\delta_u$ defined in \eqref{eq:delta-u},  $\hat{\delta}^{\rm ub}_u$ and  $\hat{\delta}^{\rm lb}_u$  are defined as follows:  $\hat{\delta}^{\rm ub}_{u}=\kappa_i^{-1}$ and $\hat{\delta}^{\rm lb}=\kappa_{i+1}^{-1}$, if $ \kappa_{i+1}^{-1}<{\delta}_u\leq \kappa_{i}^{-1}$, for  $i=0,\ldots, t-1$; $\hat{\delta}^{\rm ub}_{u}=\kappa_{i+1}$ and $\hat{\delta}^{\rm lb}_{u}=\kappa_{i}$, if $\kappa_{i}<{\delta}_u\leq \kappa_{i+1}$, for $i=0,\ldots, t-1$; and $\hat{\delta}^{\rm ub}_{u}=\kappa$ and $\hat{\delta}^{\rm lb}_{u}=0$, if $\delta_u\leq\kappa$. Note that by construction, unlike $\delta_u$, $\hat{\delta}^{\rm lb}_{u}$  and $\hat{\delta}^{\rm ub}_{u}$ are  discrete random variables. For all $u$ and $a_f$,$
\hat{\delta}^{\rm lb}_{u} \leq \delta_u\leq \hat{\delta}^{\rm ub}_{u}.$
\subsection{Upper Bound on the Outage Probability $P_{\rm out}^{m,f}$}

For $i=1,\ldots,t$, and $\hat{u}_m\in\Uc_m\backslash \Uc_m(a_f)$, let

\[
p_{i}=\P(\hat{\delta}^{\rm ub}_{\uh_m}={1\over \kappa_{i-1}})=\P(\hat{\delta}^{\rm lb}_{\uh_m}={1\over \kappa_{i}}),
\]
\[
p_{-i}=\P(\hat{\delta}^{\rm ub}_{\uh_m}=\kappa_{i})=\P(\hat{\delta}^{\rm lb}_{\uh_m}=\kappa_{i-1}),
\]
and
\[
p_0=\P(\hat{\delta}^{\rm ub}_{\uh_m}=\kappa_0)=\P(\hat{\delta}^{\rm lb}_{\uh_m}=0).
\] Also, let $
\eta\triangleq {p_f\over p_m}
$,
$\bar{n}_{m,d}\triangleq \pi(R^2-({\kappa \over 1-\kappa^2})^2d^2)\mu_m$ and
\[
q_1(\theta,d)\triangleq \sum_{i=1}^t\Big({p_i \over 1+ {\theta  \over n_h\eta\kappa_{i-1}^{\alpha}}}+{p_{-i} \over 1+ {\theta \kappa_{i}^{\alpha} \over n_h\eta }}\Big)+{p_0 \over 1+ {\theta \kappa_0^{\alpha} \over n_h\eta }}.
\]
\begin{theorem}\label{thm:1}
The outage probability of an MU serviced by an FAP located at distance $d$ from MBS, $P_{\rm out}^{m,f}(\theta,d)$, is upper bounded by \vspace{-.5em}
\begin{align*}
1\hspace{-.2em} - \hspace{-.2em} \Big(\hspace{-.1em}{1\over 1+{\theta\over n_h}}\hspace{-.2em}\Big)^{\hspace{-.2em}n_c-1}\hspace{-.2em}\ex^{\bar{n}_{m,d}( q_1\hspace{-.2em}(\theta,d) -1)} \Phi_{|\Uc_m^{\rm ns}(a_f)|}(\log(1\hspace{-.2em}+\hspace{-.2em}{\theta\over  \eta n_h\kappa_o^{\alpha}})\hspace{-.1em}),
\end{align*}
 where $\Phi_{|\Uc_m^{\rm ns}(a_f)|}$,  the Laplace transform of $|\Uc_m^{\rm ns}(a_f)|$,  is derived in  Appendix~\ref{sec:dist}.
 \end{theorem}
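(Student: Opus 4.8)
The plan is to work from the identity \eqref{P-out-m-f}, which gives $P_{\rm out}^{m,f}(\theta,d)=1-\E[\ex^{-c I_{m,f}}\mid\Ec]$ with $c=\theta n_s/(\sigma^2 p_f)$, and to produce a \emph{lower} bound on the conditional Laplace transform $\E[\ex^{-c I_{m,f}}\mid\Ec]$; one minus that lower bound is then the claimed upper bound on the outage probability. Using the simplified interference \eqref{eq:I-m-f-simplified}, I would split $I_{m,f}$ into three contributions built from disjoint sets of fading variables: (i) the co-cell interference from the FUs and the other MUs served by $a_f$ (transmit power $p_f$); (ii) the cross-tier interference from the MBS-served MUs lying \emph{outside} the coverage disk of $a_f$; and (iii) the cross-tier interference from the MUs that fall \emph{inside} the coverage disk of $a_f$ but are served by $b_m$, i.e.\ the set $\Uc_m^{\rm ns}(a_f)$. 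Since all fading coefficients are independent across users and subbands, conditioned on the point configuration the three exponentials factor, so the conditional expectation factors into a product of three conditional expectations, to be bounded one at a time. Each present interferer contributes a Laplace factor of the fading of the form $1/(1+\cdot)$; reading off the exponents from $q_1$ and the theorem statement, a co-cell interferer contributes $1/(1+\theta/n_h)$ and a cross-tier interferer with distance ratio $\delta$ contributes $1/(1+\theta\delta^\alpha/(n_h\eta))$, where $\eta=p_f/p_m$.

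For the co-cell term (i) the argument is purely combinatorial: the backhaul constraint forces the number of users served by $a_f$ to be at most $n_c$, so after deleting the victim $u_m$ there are at most $n_c-1$ co-cell interferers. As each factor $1/(1+\theta/n_h)$ lies below one, the product over any random number $\le n_c-1$ of them is bounded below by $(1/(1+\theta/n_h))^{n_c-1}$, which is the first factor, obtained deterministically with no further averaging. For the far cross-tier term (ii) I would treat the MBS-served MUs outside the disk as a Poisson population of mean count $\bar{n}_{m,d}=\pi(R^2-(\kappa/(1-\kappa^2))^2 d^2)\mu_m$, each carrying an independent mark $\delta_u$. To make the continuous $\delta_u$ tractable I would invoke the discretization set up before the theorem: because $\hat{\delta}^{\rm ub}_u\ge\delta_u$ and $1/(1+\theta\delta^\alpha/(n_h\eta))$ is decreasing in $\delta$, replacing $\delta_u$ by $\hat{\delta}^{\rm ub}_u$ only lowers each factor and hence the whole product, giving a valid lower bound; the Poisson Laplace functional then yields $\exp(\bar{n}_{m,d}(q_1(\theta,d)-1))$, since $q_1(\theta,d)$ is exactly the expected discretized per-interferer factor, the region averaging being encoded by the probabilities $p_i,p_{-i},p_0$ computed from the geometry in Appendix~\ref{app:A}. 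For the close cross-tier term (iii), every $u\in\Uc_m^{\rm ns}(a_f)$ satisfies $1/\kappa\le\delta_u\le 1/\kappa_o$, so using the worst case $\delta_u=1/\kappa_o$ lower bounds each factor by $1/(1+\theta/(\eta n_h\kappa_o^\alpha))$; the product over the $|\Uc_m^{\rm ns}(a_f)|$ such users, averaged over their random number, is precisely $\Phi_{|\Uc_m^{\rm ns}(a_f)|}(\log(1+\theta/(\eta n_h\kappa_o^\alpha)))$, supplied by Lemma~\ref{lemma:1}. Multiplying the three bounds gives the stated product.

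I expect the delicate part to be the cross-tier analysis, not the co-cell count. Two issues need care. First, legitimizing the factorization: the served MUs appearing in (i) and the not-served MUs appearing in (iii) both arise from the \emph{same} in-disk Poisson population split by the backhaul rule, and are therefore dependent; this is precisely why (i) must be bounded by a deterministic constant rather than by an independent expectation, and why (iii) is kept intact through its exact marginal transform from Lemma~\ref{lemma:1}. Second, the Laplace-functional step for (ii) relies on the interferers outside the disk being independent of the in-disk population (true by the independence-over-disjoint-regions property of the PPP) and on the probabilities $p_i,p_{-i},p_0$ correctly capturing the law of the discretized ratio $\hat{\delta}^{\rm ub}_u$; establishing those probabilities is the geometric heart of the argument, which is where the partition of $\Sc_m$ from Appendix~\ref{app:A} does the real work. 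Finally, the conditioning on $\Ec=\{d(a_f,b_m)=d,\,N_m^{a_f}\ge 1\}$ must be threaded through, but it interacts only with the in-disk terms and leaves the far-field factor untouched.
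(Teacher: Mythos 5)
Your proposal is correct and follows essentially the same route as the paper's proof: the same three-way split of $I_{m,f}$ into co-cell, far cross-tier, and $\Uc_m^{\rm ns}(a_f)$ terms, the deterministic $n_c-1$ bound from the backhaul constraint, the $\hat{\delta}^{\rm ub}$ discretization feeding the Poisson generating function $\ex^{\bar{n}_{m,d}(q_1-1)}$, and the worst-case $\kappa_o^{-1}$ ratio yielding the factor $\Phi_{|\Uc_m^{\rm ns}(a_f)|}(\log(1+\theta/(\eta n_h\kappa_o^{\alpha})))$. The only cosmetic difference is that the paper upper-bounds the co-cell interference pathwise with $n_c-1$ fresh i.i.d.\ exponentials and enlarges the far-field population via $\Uc_m(b_m)\subseteq\Uc_m\backslash\Uc_m^{\rm s}(a_f)$, whereas you achieve the same two inequalities by pointwise monotonicity of the Laplace factors, which is equivalent.
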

\begin{proof}
For MUs serviced by FAPs, as discussed in \cite{ZeinalpourJ:14-tcom},   the potential coverage area of FAP $a_f$ located at distance $d$ from $b_m$  is a circle of radius $({\kappa\over 1-\kappa^2})d$.  Due to the backhaul constraint,  all the MUs falling in this circle, $\Uc_m(a_f)$, are not serviced by $a_f$. Users in $\Uc_m(a_f)$  can be partitioned into two groups, $\Uc_m^{\rm s}(a_f)$ and $\Uc_m^{\rm ns}(a_f)$, representing the   MUs that are serviced by $a_f$ and the MUs that fall in the coverage area of   $a_f$, but are serviced by $b_m$, respectively.

Given the backhaul constraint of $n_c$ users, there are at most $n_c-1$ users (macro and femto) serviced by $a_f$  that interfere with an  FU covered by $a_f$. That is,  $|\Uc_f(a_f)\; \cup \;\Uc_m^{\rm s}(a_f)\backslash u_m|\leq n_c-1$. Also, we always have $\Uc_m(b_m)\subseteq\Uc_m\backslash\Uc_m^{\rm s}(a_f)$. Therefore, from \eqref{eq:I_macro},

\begin{align}
I_{m,f} &\leq  \sum\limits_{\ell=1}^{n_c-1}{p_f\over g}|H_{\ell}|^2\;+\sum\limits_{\uh_m\in \Uc_m\backslash \Uc_m^{\rm s}(a_f)}{ (p_m\delta_{\uh_m})^{\a} \over g}|H^i_{\uh_m,a_f}|^2\nonumber\\
 &=  \sum\limits_{\ell=1}^{n_c-1}{p_f\over g}|H_{\ell}|^2\;+\sum\limits_{\uh_m\in \Uc_m\backslash \Uc_m(a_f)}{ p_m(\delta_{\uh_m})^{\a} \over g}|H^i_{\uh_m,a_f}|^2\nonumber\\
 &\hspace{1em}+ \sum\limits_{\uh_m\in \Uc_m^{\rm ns}(a_f) }{p_m (\delta_{\uh_m})^{\a} \over g}|H^i_{\uh_m,a_f}|^2\nonumber\\
& \stackrel{(a)}{\leq} \sum\limits_{\ell=1}^{n_c-1}{p_f\over g}|H_{\ell}|^2\;+\sum\limits_{\uh_m\in \Uc_m\backslash \Uc_m(a_f)}(\delta_{\uh_m})^{\a}{p_m\over g}|H^i_{\uh_m,a_f}|^2\nonumber\\
 &\hspace{1em}+ \sum\limits_{\uh_m\in \Uc_m^{\rm ns}(a_f) }{p_m\over \kappa_o^{\a}g}|H^i_{\uh_m,a_f}|^2\nonumber\\
& \stackrel{(b)}{\leq} \sum\limits_{\ell=1}^{n_c-1}\frac{p_f}{g}|H_{\ell}|^2\;+\sum\limits_{\uh_m\in \Uc_m\backslash \Uc_m(a_f)}{p_m(\hat{\delta}^{\rm ub}_{\uh_m})^{\a}\over g}|H^i_{\uh_m,a_f}|^2\nonumber\\
 &\hspace{1em}+ \sum\limits_{\uh_m\in \Uc_m^{\rm ns}(a_f) }{p_m\over \kappa_o^{\a}g}|H^i_{\uh_m,a_f}|^2,\label{eq:i-m-f-UB-step3}
\end{align}
where $\{|H_{\ell}|^2: \ell=1,\ldots,n_c-1\}$ are i.i.d. exponential random variables independent of other random variables in \eqref{eq:i-m-f-UB-step3}. Also, $(a)$ holds because by assumption, ${d(\uh_m,b_m)/ d(\uh_m,\hat{a}_f)}\leq \kappa_o^{-1}$, for all $\uh_m\in\Uc_m$, and all $\hat{a}_f\in\Ac_f$, and $(b)$ follows because $\delta_u\leq \hat{\delta}_u^{\rm ub}$.

Since the MUs in $\Uc_m$ are generated according to a PPP and the users in  $ \Uc_m\backslash \Uc_m(a_f)$ and $ \Uc_m^{\rm ns}(a_f)$ have non-overlapping supports, they are independent. Therefore, combining \eqref{P-out-m-f} and \eqref{eq:i-m-f-UB-step3}, it follows that
\begin{align}
P_{\rm out}^{m,f}=& 1-\E[\ex^{-({\theta n_s\over \sigma^2p_f})I_{m,f}}|\Ec]\nonumber\\
\leq &1 -  \Big({1\over 1+{\theta\over n_h}}\Big)^{n_c-1}\nonumber\\
&\hspace{1.5em}\times \E\hspace{-.2em}\left[\hspace{-.2em}\Big(\hspace{-.2em}\E\hspace{-.2em}\Big[\ex^{-{\theta \over n_h\eta\sigma^2}(\hat{\delta}^{\rm ub}_{\uh_m})^{\alpha}|H^i_{\uh_m,a_f}|^2}\Big]\hspace{-.2em}\Big)^{|\Uc_m|-|\Uc_m(a_f)|}\hspace{-.1em}\right]\nonumber\\
&\hspace{1.5em}\times \E\hspace{-.2em}\Big[\hspace{-.2em}\Big(\hspace{-.2em}{1\over 1+{\theta\over  \eta n_h\kappa_o^{\alpha}}}\hspace{-.2em}\Big)^{|\Uc_m^{\rm ns}(a_f)|}\hspace{-.1em}\Big].\label{eq:Pout-UB-step1}
\end{align}
Since $\hat{\delta}^{\rm ub}_{\uh_m}$ and $|H^i_{\uh_m,a_f}|$ are independent,
\begin{align}
&\E\Big[\ex^{-{\theta \over n_h\eta\sigma^2}(\hat{\delta}^{\rm ub}_{\uh_m})^{\alpha}|H^i_{\uh_m,a_f}|^2}\Big]=
 q_1(\theta,d).\label{eq:Exp-q1}
\end{align}
Finally, $|\Uc_m|-|\Uc_m(a_f)|$ is a Poisson random variable of mean $\bar{n}_{m,d}$. Therefore, combining \eqref{eq:Pout-UB-step1} and \eqref{eq:Exp-q1} yields the desired result.
\end{proof}

\subsection{Lower Bound on the Outage Probability $P_{\rm out}^{m,f}$}\label{sec:pout-LB}
Consider partitioning the MUs in $\Uc_m{(b_m)}\backslash \Uc_m^{\rm ns}(a_f)$ into two groups:
\begin{enumerate}
\item[i)]$\Uc_m^{\rm in}(b_m)$: the subset of MUs that  fall into the coverage area of at least one FAP in $\Ac_f\backslash a_f$, but are serviced by the MBS due to the backhaul constraints, \ie
\[
\Uc_m^{\rm in}(b_m)\triangleq \cup_{\hat{a}_f\in\Ac_f\backslash a_f}\Uc_m^{\rm ns}(\hat{a}_f),
\]
\item[ii)] $\Uc_m^{\rm out}(b_m)$: the subset of MUs that are serviced by the MBS because they do not fall into the coverage area of any FAP, \ie 
\[
\Uc_m^{\rm out}(b_m)\triangleq \Uc_m(b_m) \backslash (\Uc_m^{\rm in}(b_m) \cup\Uc_m^{\rm ns}(a_f)).
\]
\end{enumerate}

For $i=1,\ldots,t$, and $\hat{u}_m\in\Uc_m^{\rm out}(b_m)$, let
\[
p'_{i}=\P(\hat{\delta}^{\rm ub}_{\uh_m}={1\over \kappa_{i-1}})=\P(\hat{\delta}^{\rm lb}_{\uh_m}={1\over \kappa_{i}}),
\]
\[
p'_{-i}=\P(\hat{\delta}^{\rm ub}_{\uh_m}=\kappa_{i})=\P(\hat{\delta}^{\rm lb}_{\uh_m}=\kappa_{i-1}),
\]
and   $p'_0=\P(\hat{\delta}^{\rm ub}_{\uh_m}=\kappa_0)=\P(\hat{\delta}^{\rm lb}_{\uh_m}=0).$
Now define 
\begin{align}
q_2(\theta,d)&\triangleq p'_0+\sum_{i=1}^t\Big({p'_i \over 1+ {\theta  \over n_h\eta\kappa_{i}^{\alpha}}}+{p'_{-i} \over 1+ {\theta \kappa_{i-1}^{\alpha} \over n_h\eta }}\Big),
\end{align}  $\gamma_1\triangleq \pi (1-q_2(\theta,d))({\kappa \over 1-\kappa^2})^2 \mu_m,$
 $
\gamma_2\triangleq{\theta \over\eta n_h (1+\kappa)^{\alpha}},
$
and
$
\gamma_3\triangleq \pi\mu_m(({\kappa\over 1-\kappa^2})^2-({\kappa_o\over 1-\kappa_o^2})^2).
$
Consider FAP $a_f$ at distance $d$ from MBS $b_m$ and FAP $\hat{a}_f\in\Ac_f\backslash a_f$. Let $(D_1,D_2)=(d(\hat{a}_f,b_m),d(\hat{a}_f,a_f))$, and define
\begin{align}
\gamma_4\triangleq \E\Big[\ex^{D_1^2(\gamma_1 -{\gamma_2\gamma_3 D_1^{\alpha}\over D_2^{\alpha}+\gamma_2D_1^{\alpha}})} \Big].
\end{align}
Note that $\gamma_4$ can easily be computed through Monte Carlo simulations.

\begin{theorem}\label{thm:2}
Let
\[
\chi\triangleq \Big({1-\ex^{-\gamma_1R^2}\over \gamma_1 R^2}\Big)(1-\Pc(n_c,\bar{n}_{\rm fu}))+\gamma_4\Pc(n_c,\bar{n}_{\rm fu}).
\]
Then, $P_{\rm out}^{m,f}(\theta,d)$, the outage probability of an MU serviced by an FAP located at distance $d$ from the MBS, is lower bounded by 
 \begin{align}
&1-  \ex^{-\bar{n}_{\rm mu}(1-q_2)}  {\ex^{\bar{n}_{\rm fap}(\chi-1)}-\ex^{-\bar{n}_{\rm fap}}\over  (1-\ex^{-\bar{n}_{\rm fap}})\chi}
\nonumber\\
& \hspace{3em}\times \Phi_{|\Uc_m^{\rm ns}(a_f)|}\Big(\log\Big(1+{\theta\over  \eta n_h\kappa^{\alpha}}\Big)\Big)(1+O(\kappa^{\alpha})),\label{eq:i-m-f-LB-final}
\end{align}
where $\Phi_{|\Uc_m^{\rm ns}(a_f)|}$ is computed in Section~\ref{sec:dist}.
\end{theorem}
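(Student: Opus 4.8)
The plan is to lower bound the outage by producing a pointwise \emph{lower} bound on the aggregate interference $I_{m,f}$ and then upper bounding $\E[\ex^{-(\theta n_s/\sigma^2 p_f)I_{m,f}}\,|\,\Ec]$ in \eqref{P-out-m-f}, since $\ex^{-cI}$ is decreasing in $I$. First I would discard the interference from the users served by $a_f$ (the $p_f$-power terms, at most $n_c-1$ of them) and, up to the $O(\kappa^\alpha)$ approximation already made in \eqref{eq:I-m-f-simplified}, the interference from the users attached to the other FAPs; what remains is the interference of the MUs served by the MBS. For each such MU $\uh_m$ I would replace $\delta_{\uh_m}$ by its discrete lower surrogate $\hat\delta^{\rm lb}_{\uh_m}\le\delta_{\uh_m}$, which lowers each term and is therefore admissible for a lower bound on $I_{m,f}$.

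Next I would use the partition $\Uc_m(b_m)=\Uc_m^{\rm ns}(a_f)\cup\Uc_m^{\rm in}(b_m)\cup\Uc_m^{\rm out}(b_m)$ given before the theorem. Because these three sets live on disjoint regions, the underlying PPP splits into independent processes and the Laplace transform factors into three independent expectations, which I would evaluate separately. For $\Uc_m^{\rm out}(b_m)$, the MUs outside every FAP coverage disk, so $\delta_{\uh_m}<1/\kappa$, averaging $\ex^{-c(\hat\delta^{\rm lb}_{\uh_m})^\alpha|H|^2}$ over the fading and then over the PPP (the region approximated by $\Sc_m$ up to $O(\kappa^\alpha)$) produces the per-point factor $q_2(\theta,d)$ and the Poisson weight $\ex^{-\bar{n}_{\rm mu}(1-q_2)}$. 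For $\Uc_m^{\rm ns}(a_f)$, the MUs inside $a_f$'s coverage disk, hence $\delta_{\uh_m}\ge 1/\kappa$, bounding $\delta_{\uh_m}^\alpha\ge\kappa^{-\alpha}$ makes every conditional factor equal to $(1+\theta/(\eta n_h\kappa^\alpha))^{-1}$, so conditioning on the count and averaging returns exactly $\Phi_{|\Uc_m^{\rm ns}(a_f)|}(\log(1+\theta/(\eta n_h\kappa^\alpha)))$, the transform from Section~\ref{sec:dist}.

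The delicate factor is $\Uc_m^{\rm in}(b_m)=\cup_{\hat a_f\neq a_f}\Uc_m^{\rm ns}(\hat a_f)$. Here I would condition on a neighboring FAP $\hat a_f$ through $(D_1,D_2)=(d(\hat a_f,b_m),d(\hat a_f,a_f))$ and split on whether $\hat a_f$ has spare capacity. With probability $1-\Pc(n_c,\bar{n}_{\rm fu})$ its capacity is exhausted by FUs, so it sheds all coverage MUs to the MBS; treating these through the $q_2$-surrogate and integrating $D_1$ over $\Sc_m$ with the radial average $\E[\ex^{-\gamma_1 D_1^2}]=(1-\ex^{-\gamma_1R^2})/(\gamma_1R^2)$ computed as in the proof of Lemma~\ref{lemma:2} gives the first term of $\chi$. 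With probability $\Pc(n_c,\bar{n}_{\rm fu})$ it keeps some MUs, and the shed remainder, a thinned PPP whose points sit near $\hat a_f$ so that $\delta_{\uh_m}\approx D_1/D_2$ can be lower bounded using the geometry of $d(\uh_m,a_f)$ and $d(\uh_m,b_m)$ together with the coefficient $\gamma_2$, contributes via its Poisson functional the factor $\gamma_4$. Combining the two cases yields $\chi$ as the per-neighbor factor, and since there are $N_{\rm fap}-1$ other FAPs conditioned on $N_{\rm fap}\ge 1$ (because $a_f$ exists), averaging $\chi^{N_{\rm fap}-1}$ over the Poisson law produces $(\ex^{\bar{n}_{\rm fap}(\chi-1)}-\ex^{-\bar{n}_{\rm fap}})/((1-\ex^{-\bar{n}_{\rm fap}})\chi)$. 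Multiplying the three factors, subtracting from $1$, and collecting the approximation errors into the stated $(1+O(\kappa^\alpha))$ yields \eqref{eq:i-m-f-LB-final}.

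The main obstacle is the $\Uc_m^{\rm in}(b_m)$ computation, i.e. the derivation of $\gamma_4$: one must control simultaneously the random number of MUs a loaded neighbor sheds to the MBS and the geometry relating their distances to $a_f$ and $b_m$, so that $\delta_{\uh_m}$ is lower bounded in the correct direction for every shed MU while the bound stays explicit in $(D_1,D_2)$. The bookkeeping of directions, namely dropping the $p_f$ interferers, lower bounding every $\delta_{\uh_m}$, and handling the capacity split that simultaneously changes the interferer count, is where the argument is most error-prone, and confining all induced errors to a single $(1+O(\kappa^\alpha))$ factor is the finishing subtlety.
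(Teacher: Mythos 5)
Your overall architecture is the paper's: lower bound $I_{m,f}$ by dropping all interferers served by $a_f$ and the other-FAP-attached users, keep only MBS-served MUs split into $\Uc_m^{\rm ns}(a_f)$, $\Uc_m^{\rm in}(b_m)$, $\Uc_m^{\rm out}(b_m)$; use $\delta_{\uh_m}\geq 1/\kappa$ to extract $\Phi_{|\Uc_m^{\rm ns}(a_f)|}(\log(1+\theta/(\eta n_h\kappa^{\alpha})))$; use the $\hat{\delta}^{\rm lb}$-surrogate and the Poisson functional to get $\ex^{-\bar{n}_{\rm mu}(1-q_2)}$ for the out-region; and average $\chi^{N_{\rm fap}-1}$ conditioned on $N_{\rm fap}\geq 1$ to get the stated ratio. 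All of that matches, including where you place the $O(\kappa^{\alpha})$ slack.

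However, there is a genuine error in exactly the step you flag as delicate: your capacity split for a neighboring FAP $\hat{a}_f$ is inverted, and the per-case contributions are misattributed. By the paper's convention $\P(X\leq k)=1-\Pc(k+1,\lambda)$, the event ``capacity exhausted by FUs, hence \emph{all} coverage MUs shed'' is $\{N_2^{\hat{a}_f}\geq n_c\}$ and has probability $\Pc(n_c,\bar{n}_{\rm fu})$, not $1-\Pc(n_c,\bar{n}_{\rm fu})$; and it is this all-shed case that produces $\gamma_4$, not the partial case: there the shed count is exactly $N_1\sim{\rm Poiss}(\gamma_3 D_1^2)$, and using the geometric bound $\delta_{\uh_m}\geq \delta_{\hat{a}_f}/(1+\kappa)+O(\kappa)$ (not any $q_2$-surrogate) its Poisson functional $\ex^{-\gamma_2\gamma_3 D_1^{2+\alpha}/(D_2^{\alpha}+\gamma_2 D_1^{\alpha})}$, multiplied by $\ex^{\gamma_1 D_1^2}$, averages to $\gamma_4$. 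Conversely, the spare-capacity case (probability $1-\Pc(n_c,\bar{n}_{\rm fu})$) contributes only the radial-average term because the shed remainder's Laplace factor is simply bounded by $1$ --- this is precisely the move in Lemma \ref{lemma:1} --- and necessarily so, since the remainder $(N_1-(n_c-N_2)^+)^+$ is not Poisson and its transform is not $\gamma_4$, contrary to your ``keeps some MUs, shed remainder gives $\gamma_4$.'' Two further points: the $\ex^{\gamma_1 D_1^2}$ factor is not generated by the shed MUs at all; it is the area correction from $S_{\rm out}\geq \pi R^2-\pi({\kappa\over 1-\kappa^2})^2\sum_{\hat{a}_f}d^2(\hat{a}_f,b_m)$ in the $\Uc_m^{\rm out}$ factor and multiplies \emph{both} cases before the split on $N_2^{\hat{a}_f}$. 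And treating the shed MUs ``through the $q_2$-surrogate'' is not a valid bound in either direction, because they concentrate near $\hat{a}_f$, so their $\hat{\delta}^{\rm lb}$ law is not the out-region law $\{p_i'\}$ underlying $q_2$. Your double swap (events and weights both exchanged) coincidentally reproduces the printed formula for $\chi$, but the derivation as you describe it cannot be carried out.
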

\begin{proof}
 Considering the described  partitioning of users in $\Uc_m{(b_m)}\backslash \Uc_m^{\rm ns}(a_f)$, and ignoring the interference caused by the other FUs and MUs that are serviced by $a_f$, $I_{m,f}$ can be lower bounded as 
\begin{align}
I_{m,f}\geq &\sum \limits_{\uh_m\in \Uc_m^{\rm in}(b_m)}{p_m\over g}(\delta_{\uh_m})^{\a}|H^i_{\uh_m,a_f}|^2\nonumber\\
\;&+\sum\limits_{\uh_m\in \Uc_m^{\rm out}(b_m)}{p_m\over g}(\delta_{\uh_m})^{\a}|H^i_{\uh_m,a_f}|^2\nonumber\\
\;&+ \sum\limits_{\uh_m\in \Uc_m^{\rm ns}(a_f)}{p_m\over g}(\delta_{\uh_m})^{\a}|H^i_{\uh_m,a_f}|^2. \label{eq:i-m-f-LB-step1.2}
\end{align}

For users in $\Uc_m^{\rm in}(b_m)$, consider FAP $\hat{a}_f\in\Ac_f \backslash a_f$,   and user $\uh_m\in\Uc_m^{\rm ns}(\hat{a}_f)$. (Refer to Fig.~\ref{fig:dist-ratio}.) Let  $d_o=d(\hat{a}_f,b_m)$. If FAP $a_f$ does not fall into the coverage area of $\hat{a}_f$, as shown in Fig.~\ref{fig:dist-ratio}, $d(\uh_m,b_m)\geq {1\over 1-\kappa^2}d_o-{\kappa\over 1-\kappa^2}d_o={d_o\over 1+\kappa}$ and $d(\uh_m,a_f)\leq d(\hat{c}_f,a_f)+{\kappa\over 1-\kappa^2}d_o$, where $\hat{c}_f$ denotes the center of the coverage area of $\hat{a}_f$. Hence,
\begin{align}
{d(\uh_m,b_m)\over d(\uh_m,a_f)}&\geq {{1\over 1+\kappa}d(\hat{a}_f,b_m) \over d(\hat{c}_f,a_f)+{\kappa\over 1-\kappa^2}d(\hat{a}_f,b_m)}.
\end{align}
On the other hand, since both $\hat{a}_f$ and $\hat{u}_m$ are  located in a circle of radius ${\kappa\over 1-\kappa^2}d_o$, $d(\hat{a}_f,\hat{u}_m)\leq {2\kappa\over 1-\kappa^2}d_o$. Therefore $d(\hat{a}_f,\hat{u}_m)=O(\kappa)$, and 
\[
{{1\over 1+\kappa}d(\hat{a}_f,b_m) \over d(\hat{c}_f,a_f)+{\kappa\over 1-\kappa^2}d(\hat{a}_f,b_m)}={{d(\hat{a}_f,b_m) \over (1+\kappa)d(\hat{a}_f,a_f)}}+O(\kappa).
\]
For users in $\Uc_m^{\rm ns}(a_f)$,  ${1\over \kappa}\leq {d(\uh_m,b_m)\over d(\uh_m,a_f)}\leq {1\over \kappa_o}$.  Let  $\delta_{\hat{a}_f}\triangleq {{d(\hat{a}_f,b_m) \over d(\hat{a}_f,a_f)}}$. Then, noting that $\hat{\delta}^{\rm lb}_{\uh_m}\leq \delta_{\uh_m}$, from \eqref{eq:i-m-f-LB-step1.2}, conditioned on the event that none of the other FAPs falls  into the coverage area of $a_f$, it follows that
\begin{align}
I_{m,f}\geq
&\sum\limits_{\uh_m\in \Uc_m^{\rm ns}(a_f)}{p_m\over \kappa^{\alpha} g}|H^i_{\uh_m,a_f}|^2\nonumber\\
& \;+\sum_{\hat{a}_f\in\Ac_f\backslash a_f}{p_m\over g(1+\kappa)^{\alpha}} (\delta_{\hat{a}_f})^{\alpha}\sum_{\hat{u}_m\in \Uc_m^{\rm ns}(\hat{a}_f)}|H^i_{\uh_m,a_f}|^2\nonumber\\
& \;+\sum\limits_{\uh_m\in \Uc_m^{\rm out}{(b_m)}}{(\hat{\delta}^{\rm lb}_{\uh_m})^{\a}p_m\over g}|H^i_{\uh_m,a_f}|^2+{O(\kappa^{\alpha})}. \label{eq:i-m-f-LB-step3}
\end{align}
\begin{figure}[t]
\begin{center}
\includegraphics[width=6.5cm]{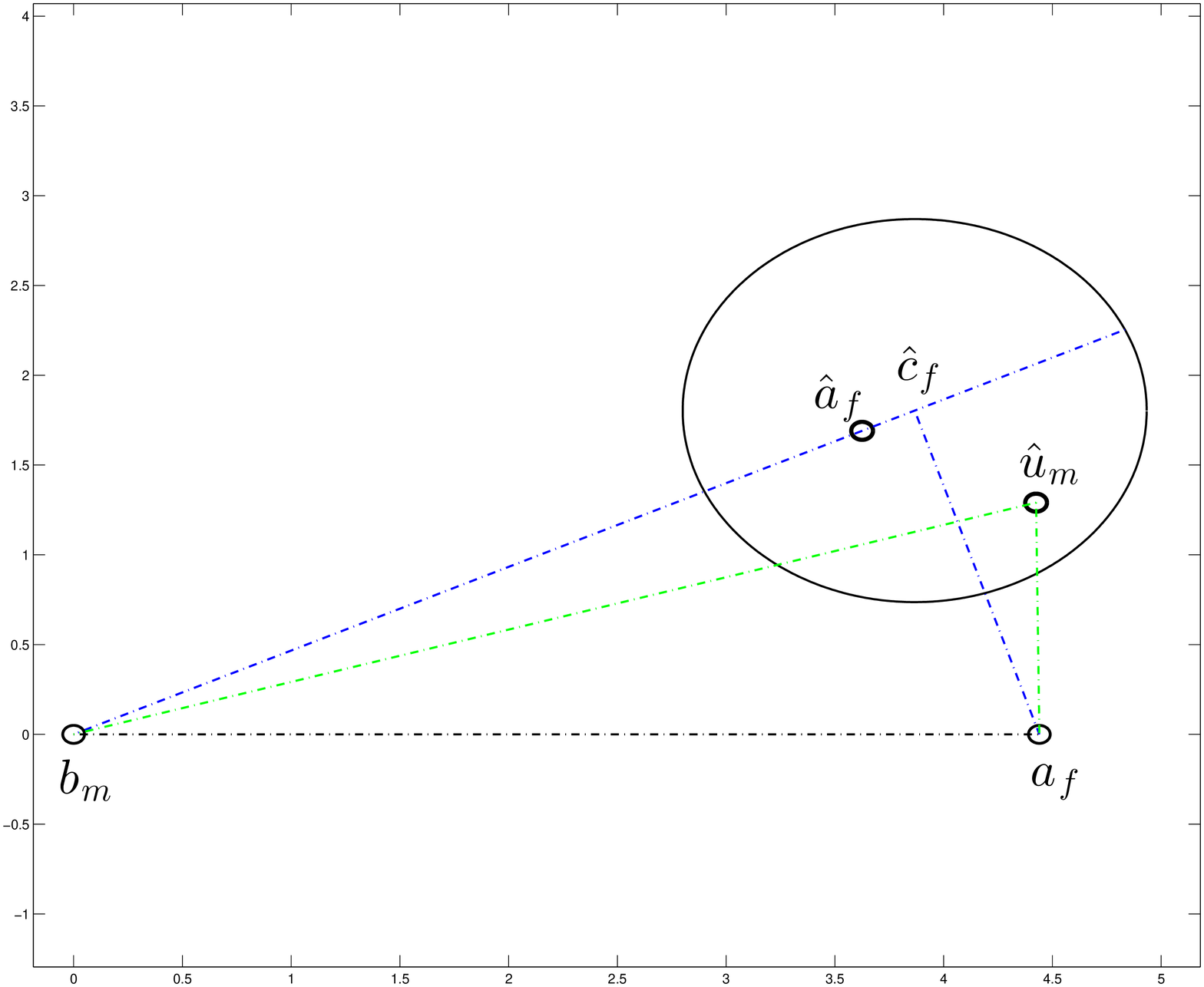}\vspace{-.5em}\caption{User $\uh_m\in\Uc_m^{\rm ns}(\hat{a}_f)$.}\label{fig:dist-ratio}\vspace{-2em}
\end{center}\vspace{-1.5em}
\end{figure}

All the interference terms in \eqref{eq:i-m-f-LB-step3} have non-overlapping supports, and hence, conditioned on the locations of FPAs, are independent.  Therefore, combining \eqref{P-out-m-f} and \eqref{eq:i-m-f-LB-step3}, it follows that\vspace{-1em}
\begin{align}
P_{\rm out}^{m,f}&= 1-\E[\ex^{-({\theta n_s\over \sigma^2p_f})I_{m,f}}|\Ac_f,\Ec]\nonumber\\
&\geq  1-\E\Big[\ex^{-{\theta \over \eta n_h\sigma^2\kappa^{\alpha}}\sum\limits_{\hat{u}_m\in\Uc_m^{\rm ns}(a_f)}|H^i_{\hat{u}_m,a_f}|^2 }\Big|\Ec\Big]\nonumber\\
&\;\;\;\times \hspace{-.2em}\E\Big[\E\Big[\ex^{-{\theta \over\eta n_h\sigma^2 }\sum\limits_{\uh_m\in \Uc_m^{\rm out}(b_m)}(\hat{\delta}^{\rm lb}_{\uh_m})^{\a} |H_{\hat{u}_m,a_f}^i|^2}\Big|\Ec,\Ac_f\Big]\nonumber\\
&\;\;\;\times \hspace{-.2em}\E \hspace{-.2em}\Big[\hspace{-.1em}\ex^{-{\theta \over\eta n_h\sigma^2 (1+\kappa)^{\alpha} }\hspace{-.4em}\sum\limits_{\hat{a}_f\in\Ac_f\backslash a_f}\hspace{-1em}(\delta_{\hat{a}_f})^{\a} \hspace{-1em} \sum\limits_{\hat{u}_m\in\Uc_m^{\rm ns}(\hat{a}_f)}\hspace{-.5em}|H^i_{\hat{u}_m,a_f}|^2}\hspace{-.2em}\Big|\Ec,\Ac_f\hspace{-.1em}\Big]\hspace{-.1em}\Big]\nonumber\\
&\;\;\;\times\hspace{-.2em}(1+ O(\kappa^{\alpha})). \label{eq:i-m-f-LB-step4}
\end{align}
Let $S_{\rm out}$ denote the area of the region that is not covered by any of the FAPs. Then, conditioned on $(\Ec,\Ac_f)$, $|\Uc_m^{\rm out}(b_m)|$ is distributed as ${\rm Poiss}(S_{\rm out}\mu_m)$. Therefore, as
\[
\E\Big[\ex^{-{\theta \over\eta n_h\sigma^2 }(\hat{\delta}^{\rm lb}_{\uh_m})^{\a} |H_{\hat{u}_m,a_f}^i|^2}\Big|\uh_m\in\Uc_m^{\rm out}(b_m)\Big]=q_2(\theta,d),
\]
it follows that
\begin{align}
&\E\Big[\ex^{-{\theta \over\eta n_h\sigma^2 }\sum\limits_{\uh_m\in \Uc_m^{\rm out}(b_m)}(\hat{\delta}^{\rm lb}_{\uh_m})^{\a} |H_{\hat{u}_m,a_f}^i|^2}\Big|\Ec,\Ac_f\Big]\nonumber\\
&=
\E\Big[\Big(\E[\ex^{-{\theta \over\eta n_h\sigma^2 }(\hat{\delta}^{\rm lb}_{\uh_m})^{\a} |H_{\hat{u}_m,a_f}^i|^2}]\Big)^{|\Uc_m^{\rm out}(b_m)|}\Big|\Ec,\Ac_f\Big]\nonumber\\
&=\ex^{(q_2(\theta,d)-1)S_{\rm out}\mu_m}.\label{eq:i-m-f-LB-step4-1}
\end{align}
On the other hand,
\begin{align}
&\E\Big[\ex^{-{\theta \over\eta n_h\sigma^2 (1+\kappa)^{\alpha} }\sum\limits_{\hat{a}_f\in\Ac_f\backslash a_f}(\delta_{\hat{a}_f})^{\a} \sum\limits_{\hat{u}_m\in\Uc_m^{\rm ns}(\hat{a}_f)}|H^i_{\hat{u}_m,a_f}|^2}\Big|\Ec,\Ac_f\Big]\nonumber\\
&=\prod_{\hat{a}_f\in\Ac_f\backslash a_f}\Big({1\over 1+ \gamma_2(\delta_{\hat{a}_f})^{\alpha} }\Big)^{N_{m,nc}^{\hat{a}_f}}.\label{eq:i-m-f-LB-step4-2}
\end{align}
Combining \eqref{eq:i-m-f-LB-step4}, \eqref{eq:i-m-f-LB-step4-1} and \eqref{eq:i-m-f-LB-step4-2}, and noting that $S_{\rm out}\geq \pi R^2-\pi ({\kappa \over 1-\kappa^2})^2 \sum_{\hat{a}_f\in\Ac_f\backslash a_f}d^2(a_f,b_m)$, we have
\begin{align}
P_{\rm out}^{m,f}& \hspace{-.2em}\geq  \hspace{-.2em} 1  \hspace{-.2em} -  \hspace{-.2em} \ex^{-\pi R^2\mu_m(1-q_2)}  \hspace{-.2em}\E \hspace{-.2em}\Big[\ex^{-{\theta \over \eta n_h\sigma^2\kappa^{\alpha}} \hspace{-.3em}\sum\limits_{\hat{u}_m\in\Uc_m^{\rm ns}(a_f)} \hspace{-1em}|H^i_{\hat{u}_m,a_f}|^2  } \hspace{-.2em}\Big|\Ec\Big] \nonumber\\
&\hspace{-2em}\times  \hspace{-.2em}\E \hspace{-.2em}\Big[  \hspace{-1em}\prod_{\hat{a}_f\in\Ac_f\backslash a_f} \hspace{-1em}\ex^{\pi (1-q_2)({\kappa \over 1-\kappa^2})^2d^2(\hat{a}_f,b_m) \mu_m}\Big( \hspace{-.2em}{1\over 1+{\gamma_2 (\delta_{\hat{a}_f})^{\alpha}}} \hspace{-.2em}\Big)^{N_{m,nc}^{\hat{a}_f}} \hspace{-.2em}\Big|\Ec \hspace{-.1em}\Big]\nonumber\\
&\hspace{-.2em}= 1-\ex^{-\bar{n}_{\rm mu}(1-q_2)}\E\Big[ \Big({1\over 1+{\theta \over \eta n_h \kappa^{\alpha}}}\Big)^{|\Uc_m^{\rm ns}(a_f)|}\Big|\Ec\Big]\nonumber\\
&\times  \E\hspace{-.1em}\Big[\hspace{-.1em}\Big(\hspace{-.1em}\E[\ex^{\gamma_1d^2(\hat{a}_f,b_m) }\Big({1\over 1+{\gamma_2(\delta_{\hat{a}_f})^{\alpha}}}\Big)^{N_{m,nc}^{\hat{a}_f}}]\Big)^{|\Ac_f|-1}\Big|\hspace{-.1em}\Ec\Big]\hspace{-.1em}.\label{eq:i-m-f-LB-step5}
\end{align}

Let $(D_1,D_2)=(d(\hat{a}_f,b_m),d(\hat{a}_f,a_f))$. Employing the upper bound derived in Lemma \ref{lemma:1}, we have
\begin{align}
&\E\Big[\ex^{\gamma_1 D_1^2 }\Big({1\over 1+{\gamma_2(\delta_{\hat{a}_f})^{\alpha} }}\Big)^{N_{m,nc}^{\hat{a}_f}}\Big]\nonumber\\
&=\E\Big[\E\Big[\ex^{\gamma_1D_1^2 }\Big({1\over 1+{\gamma_2(\delta_{\hat{a}_f})^{\alpha} }}\Big)^{N_{m,nc}^{\hat{a}_f}}\Big|D_1,D_2\Big]\Big]\nonumber\\
&\leq \hspace{-.2em}{(1 \hspace{-.2em}- \hspace{-.2em}\Pc(n_c,\bar{n}_{\rm fu})) \E[\ex^{\gamma_1D_1^2 }]\hspace{-.2em} + \hspace{-.2em}\Pc(n_c,\bar{n}_{\rm fu}) \E[\ex^{D_1^2(\gamma_1 -{\gamma_2\gamma_3 D_1^{\alpha}\over D_2^{\alpha}+\gamma_2D_1^{\alpha}})} ]}\nonumber\\
&=\chi.\label{eq:i-m-f-LB-step6}
\end{align}
 Finally, combining \eqref{eq:i-m-f-LB-step5} and \eqref{eq:i-m-f-LB-step6} yields the desired result.
\end{proof}

\section{MU served by the MBS} \label{sec:MU-by-MBS}
In this section, we analyze the outage performance of an MU serviced by the MBS. The upload SIR experienced  by user $u_m\in\Uc_m(b_m)$ in subband $i\in \{1,2,\ldots, n_s\} $ is equal to
\begin{align}
\SIR_{m,m}&= {\frac{p_m|H^i_{u_m,b_m}|^2}{n_s} \over I_{m,m}},\label{eq:SIR-m-bm}
\end{align}
where 
\begin{align*}
I_{m,m}=& \sum\limits_{a_f\in \Ac_f  }\sum\limits_{u\in \Uc_m({a}_f)\cup\; \Uc_f({a}_f) }\Big({d(u,{a}_f)\over d(u,b_m)}\Big)^{\a}\frac{p_f|H^i_{u,{a}_f}|^2}{g}\nonumber\\
&+\sum\limits_{\uh_m\in \Uc_m{(b_m)\backslash u_m}}\frac{p_m|H^i_{\uh_m,b_m}|^2}{g}.
\end{align*}
According to the assumed access policy, for user $u\in\Uc_m(a_f)$, we  have $d(u,a_f)\leq \kappa d(u,b_m)$, and therefore $({d(u,{a}_f)\over d(u,b_m)})^{\a}\leq\kappa^{\alpha}\ll1$. Also, for user $u\in\Uc_f(a_f)$, it is reasonable to assume that $d(u,{a}_f) \ll d(u,b_m)$. Hence, the first interference term in \eqref{eq:I-m-m} is negligible compared to the second one.  Under this approximation, \vspace{-.5em}
\begin{align}
I_{m,m} = \sum\limits_{\uh_m\in \Uc_m{(b_m)\backslash u_m}}\frac{p_m|H^i_{\uh_m,b_m}|^2}{g}.\label{eq:I-m-m}
\end{align}

\begin{theorem}\label{thm:3}
Let $\bar{n}_o\triangleq \bar{n}_{\rm b_m}({\kappa \over 1-\kappa^2})^2$ and $\e  \triangleq \ex^{-\bar{n}_{b_m}+\bar{n}_{\rm fap}({\ex^{\bar{n}_o}-1\over \bar{n}_o}-1)}$. The outage probability experienced by an MU serviced by the MBS, $\P_{\rm out}^{m,m}(\theta)=\P(\SIR_{m,m}\leq \theta)$, satisfies
\[
\P_{\rm out}^{m,m}(\theta) \geq 1- {1+{\theta \over n_h}\over 1-\e}\Phi_{N_m^{b_m}}(\ln(1+{\theta\over n_h})),
\]
\[
 \P_{\rm out}^{m,m}(\theta)\leq 1- (1+{\theta \over n_h})(\Phi_{N_m^{b_m}}(\ln(1+{\theta\over n_h}))-\e).
\]
\end{theorem}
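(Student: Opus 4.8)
The plan is to analyze the Laplace transform of the interference $I_{m,m}$ in \eqref{eq:I-m-m}, which is a sum over the MUs served by the MBS (excluding $u_m$) of i.i.d. exponential terms. Starting from $\P_{\rm out}^{m,m}(\theta)=\P(\SIR_{m,m}\leq \theta)$ and using that $|H^i_{u_m,b_m}|^2$ is exponential and independent of $I_{m,m}$, I would first write $\P_{\rm out}^{m,m}(\theta)=1-\E[\ex^{-(\theta n_s/(\sigma^2 p_m))I_{m,m}}]$, exactly as in \eqref{P-out-m-f}. Substituting \eqref{eq:I-m-m} and taking the expectation over the i.i.d. fading coefficients conditioned on $N_m^{b_m}$, each exponential-fading term contributes a factor $(1+\theta/n_h)^{-1}$, so the conditional Laplace transform becomes $(1+\theta/n_h)^{-(N_m^{b_m}-1)}$. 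Averaging over $N_m^{b_m}$ then yields
\begin{align*}
\P_{\rm out}^{m,m}(\theta)=1-(1+{\theta\over n_h})\,\E\Big[\Big({1\over 1+{\theta\over n_h}}\Big)^{N_m^{b_m}}\Big]=1-(1+{\theta\over n_h})\,\Phi_{N_m^{b_m}}\big(\ln(1+{\theta\over n_h})\big).
\end{align*}

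The key obstacle is that this expression is exact only when $N_m^{b_m}\geq 1$; the event $N_m^{b_m}=0$ must be handled carefully, since then $u_m$ cannot be one of the MBS's users and the conditioning is ill-posed. My plan is to split $\Phi_{N_m^{b_m}}$ according to whether $N_m^{b_m}=0$ or $N_m^{b_m}\geq 1$, isolating the atom at zero. The quantity $\e$ is designed to be (an estimate of) $\P(N_m^{b_m}=0)$: indeed $\ex^{-\bar{n}_{b_m}}$ is the probability the MBS region alone is empty, and the factor $\ex^{\bar{n}_{\rm fap}(({\ex^{\bar{n}_o}-1})/\bar{n}_o-1)}$ accounts, via the same Poisson/annulus averaging used in Lemma~\ref{lemma:2} and in \eqref{eq:lemma5-2}, for the MUs diverted to the FAPs so that $N_m^{b_m}=0$. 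So I would establish $\P(N_m^{b_m}=0)\leq \e$ (or the matching bound in the appropriate direction) by the same calculation that produced the bounds in Lemma~\ref{lemma:2}, computing $\E[(\E[\ex^{sN^{a_f}}])^{N_{\rm fap}}]$ at the specific value $s=\ln(1+\theta/n_h)$ and integrating $d^2(a_f,b_m)$ uniformly over the disk as in \eqref{eq:lemma5-2}.

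For the upper bound on $\P_{\rm out}^{m,m}$, I would drop the atom at zero from the averaged Laplace transform, writing
\begin{align*}
\E\Big[\Big({1\over 1+{\theta\over n_h}}\Big)^{N_m^{b_m}}\Big]=\P(N_m^{b_m}=0)+\E\Big[\Big({1\over 1+{\theta\over n_h}}\Big)^{N_m^{b_m}}\ind_{N_m^{b_m}\geq 1}\Big],
\end{align*}
and subtract the $\P(N_m^{b_m}=0)\leq \e$ contribution to produce $\P_{\rm out}^{m,m}(\theta)\leq 1-(1+\theta/n_h)(\Phi_{N_m^{b_m}}(\ln(1+\theta/n_h))-\e)$. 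For the lower bound, I would instead bound the conditional expectation given $N_m^{b_m}\geq 1$ by renormalizing: since $\Phi_{N_m^{b_m}}(s)\leq \P(N_m^{b_m}=0)+(1-\P(N_m^{b_m}=0))\E[\cdots\mid N_m^{b_m}\geq 1]$, dividing through by $(1-\e)$ after using $\P(N_m^{b_m}=0)\leq\e$ gives the factor $(1+\theta/n_h)/(1-\e)$ appearing in the stated lower bound. The main delicate point throughout is matching the direction of the inequality $\P(N_m^{b_m}=0)\lessgtr\e$ to the side of the bound being proved, and confirming that the product-over-FAPs averaging in Lemma~\ref{lemma:2} indeed specializes to the clean closed form $\ex^{\bar{n}_{\rm fap}(({\ex^{\bar{n}_o}-1})/\bar{n}_o-1)}$ when one sets the interference exponent to correspond to the empty-region event; I expect this specialization to be the step requiring the most care.
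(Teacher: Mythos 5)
Your proposal follows the paper's proof almost step for step: the paper likewise starts from the exact identity $\P_{\rm out}^{m,m}(\theta)=1-\E\big[a^{N_m^{b_m}-1}\,\big|\,N_m^{b_m}\geq 1\big]$ with $a=(1+\theta/n_h)^{-1}$, rewrites the conditional expectation as $a^{-1}\big(\Phi_{N_m^{b_m}}(-\ln a)-\P(N_m^{b_m}=0)\big)/\big(1-\P(N_m^{b_m}=0)\big)$, establishes the single estimate $\P(N_m^{b_m}=0)\leq\e$, and then gets both stated bounds by exactly your atom-dropping and renormalization manipulations: only the upper bound on the atom is needed, since $(\Phi-p_0)/(1-p_0)\geq \Phi-p_0\geq \Phi-\e$ gives the outage upper bound and $(\Phi-p_0)/(1-p_0)\leq \Phi/(1-\e)$ gives the outage lower bound, where $p_0=\P(N_m^{b_m}=0)$. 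So your hedge about "matching the direction of the inequality" resolves cleanly: $p_0\leq\e$ alone suffices for both sides.

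The one step you misdescribe is the proof of $p_0\leq\e$. It is not obtained by evaluating $\E\big[(\E[\ex^{sN^{a_f}}])^{N_{\rm fap}}\big]$ at $s=\ln(1+\theta/n_h)$ --- that cannot be right, since $\e$ does not depend on $\theta$. The paper instead uses a void-probability argument: $\P(N_m^{b_m}=0)\leq \P(|\Uc_m^{\rm out}(b_m)|=0)$, and conditioned on $\Ac_f$ the number $|\Uc_m^{\rm out}(b_m)|$ is ${\rm Poiss}(S_{\rm out}\mu_m)$, so $\P(|\Uc_m^{\rm out}(b_m)|=0)=\E[\ex^{-S_{\rm out}\mu_m}]$. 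Lower-bounding the uncovered area by $S_{\rm out}\geq \pi R^2-\pi({\kappa\over 1-\kappa^2})^2\sum_{a_f\in\Ac_f}d^2(a_f,b_m)$ and then averaging with the same disk integral as in Lemma \ref{lemma:2}, namely $\E[\ex^{c\,d^2(a_f,b_m)}]=(\ex^{cR^2}-1)/(cR^2)$ with $cR^2=\bar{n}_o$, together with the Poisson generating function of $|\Ac_f|$, yields exactly $\ex^{-\bar{n}_{b_m}+\bar{n}_{\rm fap}(({\ex^{\bar{n}_o}-1})/{\bar{n}_o}-1)}=\e$. Your heuristic reading of $\e$ (empty-disk probability times a FAP-coverage correction) is the right intuition, and you flagged this specialization as the delicate point; with the void-probability computation substituted for your generating-function guess, your outline becomes the paper's proof.
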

\begin{proof}
Combining \eqref{eq:SIR-m-bm} and \eqref{eq:I-m-m}, since $|H^i_{u_m,b_m}|^2$ satisfies an exponential distribution, we have
\begin{align}
\P_{\rm out}^{m,m}(\theta) &= 1-\E[\ex^{-({\theta n_s\over \sigma^2p_m})I_{m,m}}]\nonumber\\
&= 1-\E\Big[ ({1\over 1+ {\theta \over n_h}})^{N_m^{b_m}-1} \Big| N_m^{b_m}\geq 1 \Big].
\end{align}
Let $a\triangleq {1\over 1+ {\theta \over n_h}}$. Then,

\begin{align*}
\E\Big[ ({1\over 1+ {\theta \over n_h}})^{N_m^{b_m}-1} \Big| N_m^{b_m}\geq 1 \Big]&=\E[ a^{N_m^{b_m}-1} | N_m^{b_m}\geq 1 ]\nonumber\\
&\hspace{-4em}=\sum_{i=1}^{\infty}a^{i-1} \P(N_m^{b_m}=i|N_m^{b_m}\geq 1)\nonumber\\
&\hspace{-4em}=\sum_{i=1}^{\infty}a^{i-1}{ \P(N_m^{b_m}=i)\over \P(N_m^{b_m}\geq 1)}\nonumber\\
&\hspace{-4em}={a^{-1}(\E[a^{N_m^{b_m}}]-\P(N_m^{b_m}=0))\over 1- \P(N_m^{b_m}=0)}\nonumber\\
&\hspace{-4em}={a^{-1}(\Phi_{N_m^{b_m}}(-\ln a)-\P(N_m^{b_m}=0))\over 1- \P(N_m^{b_m}=0)}.
\end{align*}
We first derive an upper bound in $\P(N_m^{b_m}=0)$. As defined in Section \ref{sec:pout-LB}, let $\Uc^{\rm out}_m(b_m)$ denote the set of users in $\Uc_m(b_m)$ that fall into the coverage area of no FAP. Also, let $\Uc^{\rm in}_m(b_m)=\Uc_m(b_m)\backslash \Uc^{\rm out}_m(b_m)$. Then,
\begin{align*}
\hspace{-.2em}\P(\hspace{-.2em}N_m^{b_m}\hspace{-.2em}=\hspace{-.2em}0\hspace{-.1em})&\hspace{-.2em}=\hspace{-.2em}\P(|\Uc^{\rm out}_m\hspace{-.1em}(\hspace{-.1em}b_m)|\hspace{-.2em}=\hspace{-.2em}|\Uc^{\rm in}_m\hspace{-.1em}(\hspace{-.1em}b_m)|\hspace{-.2em}=\hspace{-.2em}0) \hspace{-.2em}\leq \hspace{-.2em}\P(|\Uc^{\rm out}_m\hspace{-.1em}(\hspace{-.1em}b_m)|\hspace{-.2em}=\hspace{-.2em}0\hspace{-.1em}).
\end{align*}
Conditioned on $\Ac_f$, $|\Uc^{\rm out}_m(b_m)|$ is distributed as ${\rm Poiss}(S_{\rm out}\mu_m)$. Therefore,
\begin{align*}
 \P(|\Uc^{\rm out}_m(b_m)|=0)=\E[\ex^{-S_{\rm out}\mu_m}].
\end{align*}
But $S_{\rm out} \geq \pi R^2-\pi({\kappa \over 1-\kappa^2})^2\sum_{a_f\in\Ac_f} d^2(a_f,b_m)$. Hence,
\begin{align*}
 \P(|\Uc^{\rm out}_m(b_m)|\hspace{-.2em}=\hspace{-.2em}0)&\leq \E[\ex^{-\bar{n}_{b_m}+\pi({\kappa \over 1-\kappa^2})^2\mu_m\sum_{a_f\in\Ac_f} d^2(a_f,b_m)}]\nonumber\\
 &= \ex^{-\bar{n}_{b_m}}\E\Big[(\E[\ex^{\pi({\kappa \over 1-\kappa^2})^2d^2(a_f,b_m)}])^{|\Ac_f|}\Big]\nonumber\\
  &= \ex^{-\bar{n}_{b_m}}\E\Big[\Big({\ex^{\bar{n}_o}-1\over \bar{n}_o}\Big)^{|\Ac_f|}\Big]\nonumber\\
  &= \ex^{-\bar{n}_{b_m}+\bar{n}_{\rm fap}({\ex^{\bar{n}_o}-1\over \bar{n}_o}-1)}\nonumber\\
  &=\e.
\end{align*}
\end{proof}
\begin{remark}
Combining the upper and lower bounds on $\Phi_{N_m^{b_m}}(.)$ derived in  Lemma \ref{lemma:2} with the lower and upper bounds of Theorem \ref{thm:3} yields lower and upper bounds on $\P_{\rm out}^{m,m}(\theta,d_f) $, respectively, which are in terms of the system parameters.
\end{remark}

\section{Numerical results}\label{sec:num}

In this section, we present some simulation results and compare the results with the  obtained upper and lower bounds.  Throughout this section, the  simulation results are generated by $10^5-10^6$ realizations. We also compare our results with the bounds derived in~\cite{ZeinalpourJ:14-tcom} for the case in which there is no backhaul constraint.  The considered setup  is a two-tier network in a circle of radius $R=1\, {\rm Km}$ with the MBS located at the center. In the ensuing plots, unless otherwise stated,  the default values in Table \ref{Tab.1} are used.

To evaluate the upper and lower bounds stated in Theorems \ref{thm:1} and \ref{thm:2}, we need to compute the values of $\{p_i\}_{i=-t}^{i=t}$ and $\{p'_i\}_{i=-t}^{i=t}$, respectively. The values of $\{p_i\}$ are given in Lemma 1 of \cite{ZeinalpourJ:14-tcom}. As discussed in Appendix \ref{app:A}, for small values of $\kappa$, the MUs in $\Uc_m^{\rm out}(b_m)$ have a near-uniform distribution. Therefore, the same  Lemma 1 from \cite{ZeinalpourJ:14-tcom} also provides a reasonable approximation for the values of $\{p'_i\}$.

\begin{table}[t]
\footnotesize
\begin{center}
\caption{Simulation parameters}\label{Tab.1}
\begin{tabular}{|c|c|c|}
  \hline
  Sym.         & Description                              & Default Values \\
\hline \hline
 $\lambda_f$        &   density of FAPs  &  $5\times 10^{-6} \, {\rm m^{-2}}$   \\
\hline
$\mu_f$        &   density of femto users  &  $5\times 10^{-3} \, {\rm m^{-2}}$   \\
\hline
 $\mu_m$            &   density of macrocell users     &    $40\times 10^{-6}\, {\rm m^{-2}}$      \\
\hline
  $\delta$             &    ring width of FUs placement   &    $5\, {\rm m}$     \\
 \hline
 $r_f$             &ring  internal radius of  FUs placement&    $10\, {\rm m}$     \\
\hline
$\alpha$             &     path loss exponent  &   4     \\
\hline
$T$             &     SIR threshold level  &    $2$     \\
\hline
$n_s$             &   number of subbands    &    $32$     \\
\hline
$n_h$             &   number of subchannels in each subbands    &    $1024$     \\
\hline
$\eta$             &    power ratio between FAPs and MBS   &    $40$     \\
\hline
$\kappa$             &    handover parameter   &    $0.08$     \\
\hline
    \end{tabular}
\end{center}
\end{table}

\begin{figure}
\begin{center}
\includegraphics[height=5.5cm]{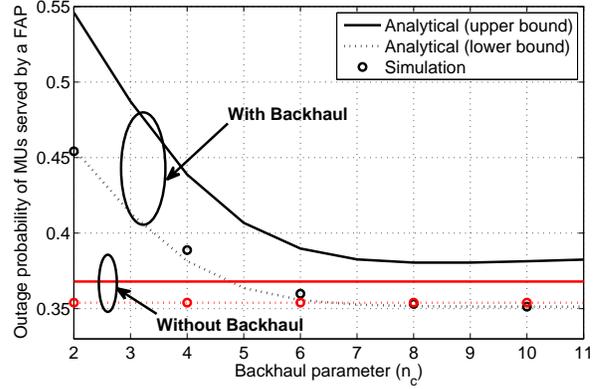}\caption{Outage probability of an MU served by an FAP (at distance of $800\, {\rm m\;}$ from the MBS) as
a function backhaul parameter $n_c$.}\label{versus_nc1}\end{center}
\end{figure}

Fig.~\ref{versus_nc1} shows the effect of the backhaul capacity $n_c$ on the outage probability experienced by the MUs  serviced by a FAP located at $d_f=800\, {\rm m\;}$ from $b_m$.  Increasing the backhaul capacity $n_c$ results in statistically  more MUs being serviced by FAPs, which in turn reduces the cross-tier interference experienced  by users served by the FAPs. At the same time, this will increase the co-tier interference.  However,  from the figure, the cross-tier interference is the dominant term  compared to the co-tier one. Also, it can be observed that as $n_c$ increases, the backhaul-constraint bounds converge to those of without restriction, computed in~\cite{ZeinalpourJ:14-tcom}. It should be mentioned that for all values of $n_c$, the bounds are consistent with  the simulation results, which confirm the accuracy of the derived analytical bounds.

\begin{figure}[t]
\begin{center}
\includegraphics[height=6cm]{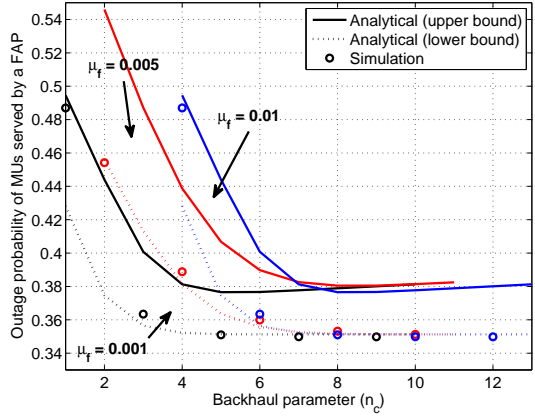}\caption{Outage probability of an MU served by an FAP as
a function backhaul parameter $n_c$ for different FUs densities.}\label{versus_nc2}\end{center}
\end{figure}

Fig.~\ref{versus_nc2} shows  the outage probability experienced by the MUs serviced by an FAP located at $d_f=800\, {\rm m\;}$ from $b_m$ as a function of the backhaul parameter $n_c$, for different values of $\mu_f$, the FUs' density. Obviously, as $\mu_f$ increases, the interference caused by FUs also increases. This will increase the outage probability of the MUs serviced by the FAPs. For large values of $n_c$, the effect of backhaul constraint fades away, and since the dominant cross-tier interference does not depend on $\mu_f$, the curves converge together.

\begin{figure}[t]
\begin{center}
\subfigure[]{\includegraphics[height=45mm]{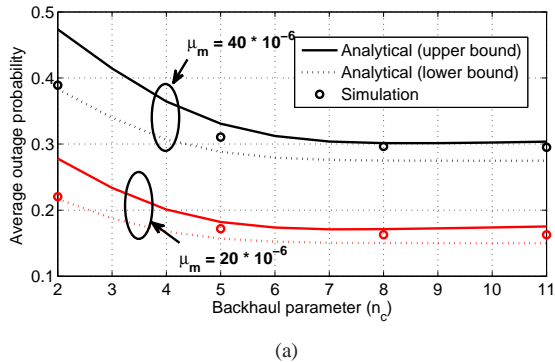}\label{vs_nc_mum1}}
\subfigure[]{\includegraphics[height=45mm]{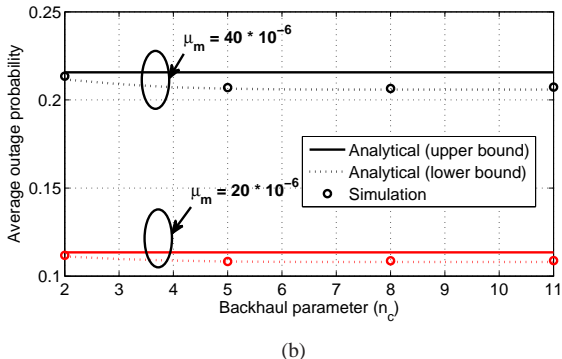}\label{vs_nc_mum1}}
\end{center}
\begin{center}
\caption{Outage probability of MUs as
a function of the backhaul parameter $n_c$ for different MUs densities a) MUs served by FAPs  b) MUs served by the MBS.} \label{vs_nc_mum}
\end{center}
\end{figure}

 Fig.~\ref{vs_nc_mum} shows the average outage probability experienced by MUs  as a function of  $n_c$,  for two different values of $\mu_m$, the MUs' density.  Increasing $\mu$ increases both cross- and co-tiers interferences, and hence results in higher outage probabilities.\footnote{For plotting the average outage probability experienced by  MUs served by the FAPs, we take the expected values of the upper and lower bounds obtained in Theorems \ref{thm:1} and \ref{thm:2} by considering the randomness in $d_f$.}

\begin{figure}
\begin{center}
\subfigure[]{\includegraphics[height=55mm]{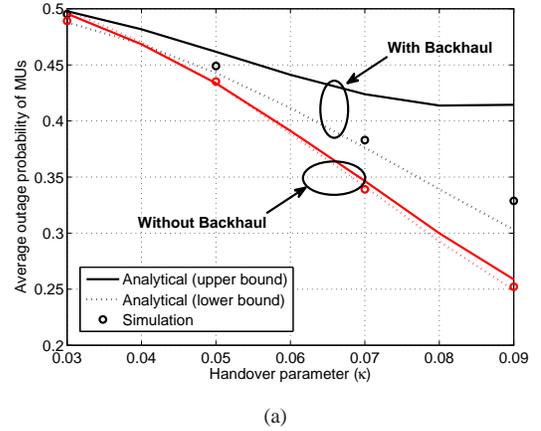}\label{vs_kappa_FBS}}
\subfigure[]{\includegraphics[height=55mm]{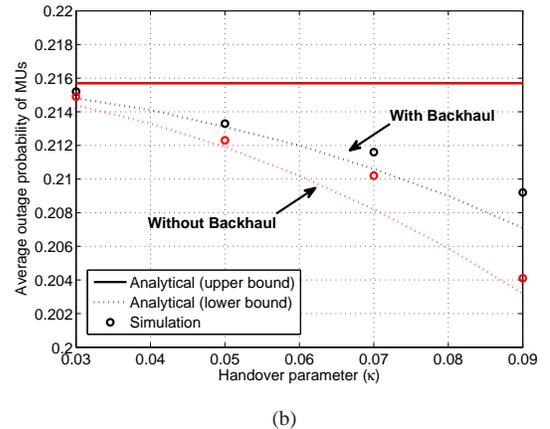}\label{vs_kappa_MBS}}
\end{center}
\begin{center}
\caption{Outage probability of MUs as
a function of the handover parameter $\kappa$ for the cases of with and without backhaul constraints a) MUs served by FAPs  b) MUs served by the MBS.} \label{vs_kappa}
\end{center}
\end{figure}

Fig.~\ref{vs_kappa} shows the average outage performance of MUs as a function of  handover parameter $\kappa$ and compares the results to the case of no  backhaul constraints.  For the case in which backhaul constraint is present,  it is assumed that $n_c = 3$. As it can be observed, in contrast to the downlink scenario \cite{CheungQ:12}, in both cases the outage probability is a monotonic function of $\kappa$. As explained in~\cite{ZeinalpourJ:14-tcom}, the  difference between the uplink and downlink arises from the fact that in the downlink scenario, as the MUs get farther away from the MBS, their received powers decrease and hence SIRs decrease as well. On the other hand,  in the uplink comunication, as they get farther away from the MBS,  due to the power control, their transmit powers  increase as well to compensate for the path loss. Naturally,  increasing the handover parameter increases the number of MUs  covered  by FAPs and hence  lowers the  co-tier interference. Note that while  the gap between the upper and lower bounds widens  as $\kappa$ increases, the  lower bound follows  the simulation results for all values of $\kappa$.

\begin{figure}[t]
\begin{center}
\includegraphics[height=50 mm]{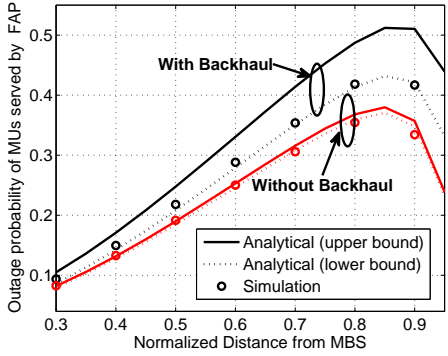}\caption{Outage probability of an MU served by an FAP as
a function of the normalized distance of the FAP from the MBS for the cases of with and without backhaul constraints.}\label{versus_df1}\end{center}
\end{figure}

Fig.~\ref{versus_df1} shows the outage probability of MUs  served by FAPs as a function of  the FAP's normalized distance from the MBS, and compares the results with the case of no backhaul restriction. Here $n_c=3$. As expected, the outage probability in the presence of backhaul  is higher that the ideal case where the FAPs have infinite backhaul capacity. The reason is that  because of the backhaul constraints fewer MUs are served by the FAPs and this leads to higher cross-tier interference. However, in both cases, at first, the outage probability  increases as the MU gets farther  from the MBS. Due to the constant received power assumption at the MBS, as the MU gets farther from the MBS, it will transmit
at a higher power, which leads to the  degradation in the performance of FUs and also MUs served by the nearby FAPs.  However, as the femtocells get close to the fringes of the cell, the outage probabilities start to  improve as well. The reason is that  femtocells that are far away from the MBS have larger coverage areas and therefore, in those regions most  MUs are serviced by nearby FAPs.

Fig.~\ref{versus_df2} shows the outage probability of MUs  served by FAPs as function of the distance between the FAP and the MBS, for different values of MUs' density ($\mu_m$). Obviously, for a fixed backhaul parameter, which is set to 3 in these curves, more MUs being served by the MBS results in higher cross-tier interference  and  hence higher  outage probabilities for MUs served by the FAPs.
\begin{figure}[t]
\begin{center}
\includegraphics[height=45 mm]{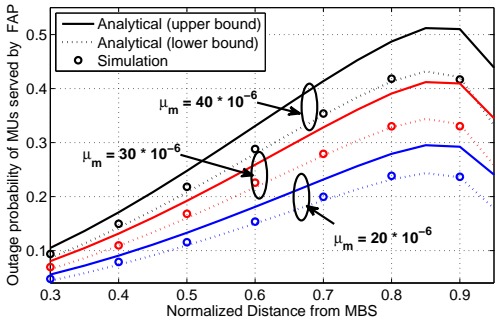}\caption{Outage probability of a MU served by a FAP as
a function of the normalized distance of the FAP from the MBS for different MUs densities.}\label{versus_df2}\end{center}
\end{figure}

\section{Conclusions}\label{sec:con}
In this paper, we have  studied two-tier cellular networks, in which each FAP has a finite backhaul capacity limiting the number of users it can serve. The MUs, FUs and FAPs have all been assumed to have  stochastic deployments according to PPPs. We have considered fixed backhaul constraints for FAPs, which limit the number of users each FAP can service. Under these assumptions,  we have derived analytical upper and lower bounds on  the outage probabilities  of MUs serviced by FAPs and MUs serviced by the MBS. All bounds have been confirmed by our simulation results.

While in our analysis we have assumed that there is only a single MBS, the results can also be applied to real networks with multiple MBSs. To do this extension, we only need to assume that each MU is assigned to its closest MBS and the macro cells  employ one of the well-known frequency reuse methods that orthogonalize neighboring cells.


\setcounter{equation}{0}
\renewcommand{\theequation}{\thesection.\arabic{equation}}

\appendices
\section{Partitioning $\Sc_m$}\label{app:A}
In this section, we briefly review the partitioning  of the coverage area presented in \cite{ZeinalpourJ:14-tcom}. Consider the MBS $b_m$ and FAP $a_f$ located at distance $d$ from each other. (Refer to Fig.~\ref{fig:kappa-quant}.) $\Sc_m$ denotes the circle of radius $R$ around $b_m$. The set of points $u$ such that $d(u,a_f)/d(u,b_m)=\kappa'$ or $d(u,a_f)/d(u,b_m)=1/\kappa'$, where $\kappa'\in(0,1)$ are two circles of radius ${\kappa'\over 1-\kappa'^2}$. In  Fig.~\ref{fig:kappa-quant}, the colored pairs of circles correspond to three different values of $\kappa'$.


\begin{figure}[b]
\begin{center}
\includegraphics[width=7cm]{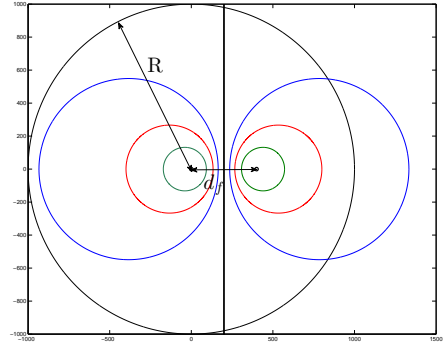}\caption{Partitioning the coverage area}\label{fig:kappa-quant}\end{center}
\end{figure}

Consider $\kappa_0,\ldots,\kappa_t$ such that $\kappa_0=\kappa<\kappa_1<\kappa_2<\ldots<\kappa_t=1$, and the $2t$ pairs of circles corresponding to  $\kappa_0,\ldots,\kappa_{t-1}$. These circles do not intersect and in addition to the line corresponding to $\kappa_t=1$, which corresponds to the set of points $u$ satisfying $d(u,a_f)=d(u,b_m)$,  partition $\Sc_m$  into $2(t+1)$ regions.
\section{Distribution of users in $\Uc_m^{\rm ns}(b_m)$}\label{sec:dist-users-nc}

As a reminder $\Uc_m^{\rm out}(b_m)$ denotes the set of users that are covered by the MBS $b_m$ because they do not fall into the coverage area of any of the FAPs. In this appendix, we prove that for $\kappa$ small the distance of the users in $\Uc_m^{\rm out}(b_m)$  to the MBS has an almost uniform distribution. In this section, we assume  that $\kappa\leq 0.5$.

Given FAP $a_f\in\Ac_f$, let $\Cc(a_f)$ denote the coverage area of $a_f$. As explained earlier, for FAP $a_f$ at distance $d$ from $b_m$, $\Cc(a_f)$ is circle of radius ${\kappa d\over (1-\kappa^2)}$, whose center is located at distance ${d\over 1-\kappa^2}$ from $b_m$ on the line connecting $b_m$ to $a_f$.

Consider user $u$ that is located uniformly at random on $\Sc_m$. Define  $\Ec$ as the event that $u$ does not fall in the coverage area of any of the FAPs, \ie 
\[
\Ec\triangleq \{u\notin \Cc(a_f), \;\forall a_f\in\Ac_f\}.
\]
Let
\[
D_u\triangleq d(u,b_m).
\]
In this section, we derive the conditional pdf of $D_u$ conditioned on  $\Ec$, $f_{D_u}(\cdot|\Ec)$. By the Bayes formula,
\begin{align}
f_{D_u}(d|\Ec)={f_{D_u}(d)\P(\Ec|D_u=d)\over \P(\Ec)}.\label{eq:cond-pdf-Du-Bayes-rule}
\end{align}
Since $u$ is drawn uniformly at random, $f_{D_u}(d)={2d\over R^2}$. On the other hand, since the FAPs are drawn according to a PPP of density $\lambda_f$, we have
\begin{align}
\P(\Ec|D_u=d)&=\sum_{n=0}^{\infty}\P(\Ec,N_{\rm fap}=n|D_u=d) \nonumber\\
&=\sum_{n=0}^{\infty}p_{N_{\rm fap}}(n) \left(\P(u\notin \Cc(a_f)|D_u=d)\right)^n\nonumber\\
&=\sum_{n=0}^{\infty}\ex^{-\bar{n}_{\rm fap}}{(\bar{n}_{\rm fap})^n\over n!}\left(\P(u\notin \Cc(a_f)|D_u=d)\right)^n\nonumber\\
&=\ex^{-\bar{n}_{\rm fap}(1-\P(u\notin \Cc(a_f)|D_u=d))}\nonumber\\
&=\ex^{-\bar{n}_{\rm fap}\P(u\in \Cc(a_f)|D_u=d)}.\label{eq:prob-e-given-Du-eq-d-1}
\end{align}

To compute $\P(u\in \Cc(a_f)|D_u=d)$ consider user $u$ at distance $d$ from $b_m$ and FAP located at distance $r$ from $b_m$.  (Refer to Fig.~\ref{fig:coverage-prob}.) In order for $u$ to be covered by $a_f$, $d$ should satisfy
\[
{r\over 1-\kappa^2}-{r\kappa\over 1-\kappa^2}\leq d\leq {r\over 1-\kappa^2}+{r\kappa\over 1-\kappa^2},
\]
or
\[
(1-\kappa)d\leq r\leq(1+\kappa)d.
\]
Given $r\in ((1-\kappa)d,(1+\kappa)d)$, the angle between the lines $(b_m,a_f)$ and $(b_m,u)$ should be within $(-\theta,\theta)$, where
\begin{align}
\cos (\theta)&= {d^2 + ({r \over 1-\kappa^2})^2- ({\kappa r \over 1-\kappa^2})^2\over {2dr \over 1-\kappa^2}}= {d^2(1-\kappa^2) + r^2\over 2dr}.\label{label:cos-thata-1}
\end{align}
Let $r=d(1+\rho)$, where $\rho\in(-\kappa,\kappa)$. Employing this change of variable, it follows from \eqref{label:cos-thata-1} that
\[
\cos (\theta)=1-{\kappa^2-\rho^2\over 2(1+\rho)},
\]
and
\begin{align*}
\sin^2 (\theta)&={\kappa^2-\rho^2\over 2(1+\rho)}(2-{\kappa^2-\rho^2\over 2(1+\rho)})\nonumber\\
&=\kappa^2-\rho^2(1-{\rho\over 2(1+\rho)}-{\kappa^2-\rho^2\over 4(1+\rho)^2}).
\end{align*}
Therefore, since for $0\leq x\leq 1$, $1-x\leq \sqrt{1-x}\leq 1$, we have
\begin{align}
\sqrt{\kappa^2-\rho^2}(1-{\rho\over 2(1+\rho)}-{\kappa^2-\rho^2\over 4(1+\rho)^2}) \leq \sin(\theta)&\leq \sqrt{\kappa^2-\rho^2}.\label{eq:bound-1}
\end{align}
But,
\begin{align}
{\rho\over 2(1+\rho)}+{\kappa^2-\rho^2\over 4(1+\rho)^2}&\leq {\kappa\over 2(1-\kappa)}+{\kappa^2\over 4(1-\kappa)^2}\leq 2\kappa,\label{eq:bound-2}
\end{align}
where the last line follows from our assumption that $\kappa\leq 0.5$. And,
\begin{align}
\int_{-\kappa}^{\kappa} 2(1+\rho)\sqrt{\kappa^2-\rho^2}d\rho=\pi \kappa^2.\label{eq:integral}
\end{align}

\begin{figure}
\begin{center}
\includegraphics[width=7.5cm]{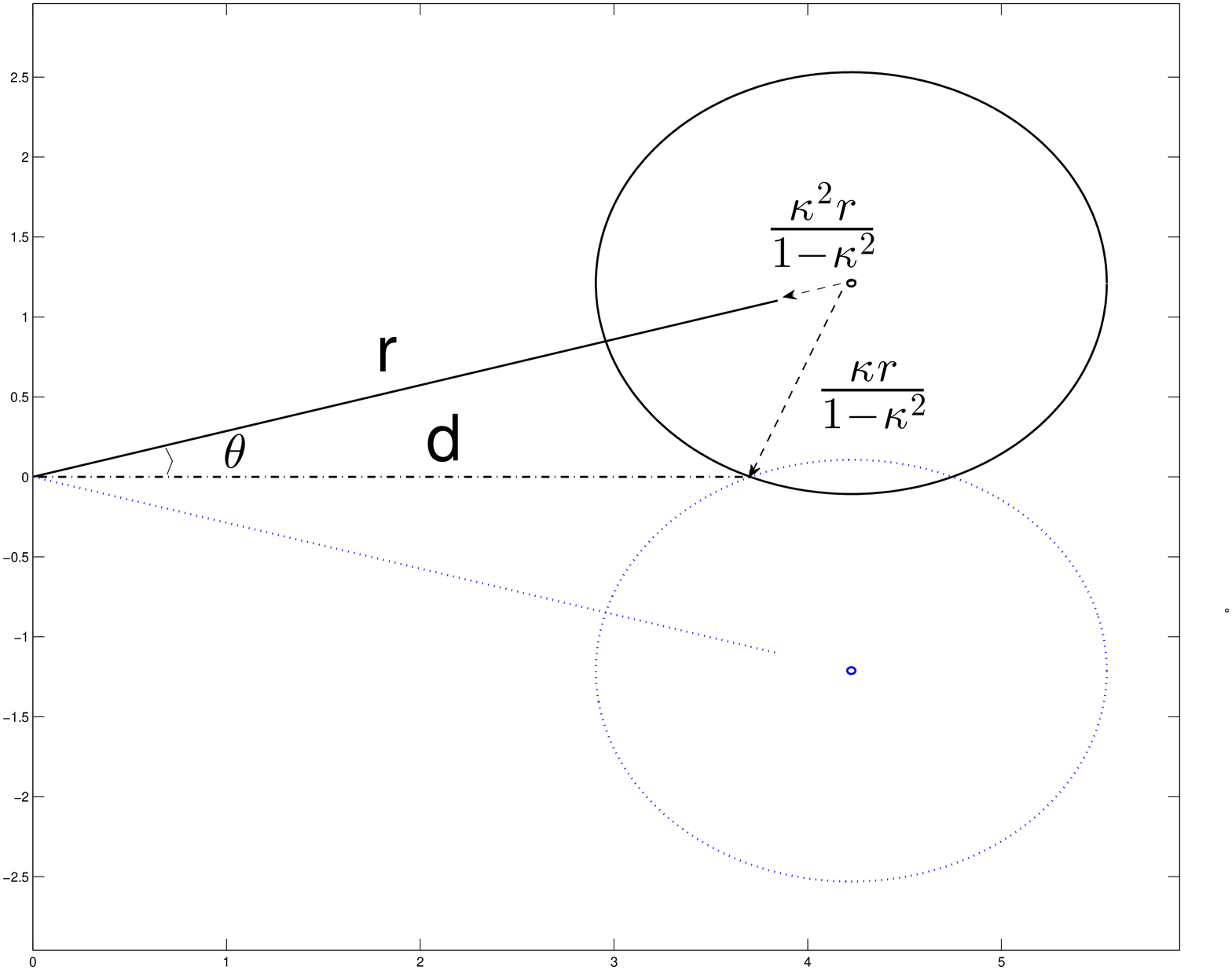}\caption{User $u$ located at distance $d$ from $b_m$ falling in the coverage area of $a_f$ at distance $r$ from $b_m$.}\label{fig:coverage-prob}
\end{center}
\end{figure}

Therefore, since $\P(u\in \Cc(a_f)|D_u=d)={d^2 \over \pi R^2} \int_{-\kappa}^{\kappa} 2(1+\rho)\sin( \theta) d\rho$, combining  \eqref{eq:bound-1}, \eqref{eq:bound-2} and \eqref{eq:integral}, it follows that
\begin{align}
(1-2\kappa){d^2\kappa^2\over R^2}\leq \P(u\in \Cc(a_f)|D_u=d)&\leq {d^2\kappa^2\over R^2}.\label{eq:prob-coverage-u-dist-d}
\end{align}
Combining \eqref{eq:prob-e-given-Du-eq-d-1} and \eqref{eq:prob-coverage-u-dist-d} yields
\begin{align}
\ex^{-\bar{n}_{\rm fap}d^2\kappa^2/R^2}\leq \P(\Ec|D_u=r)\leq \ex^{-\bar{n}_{\rm fap}d^2\kappa^2(1-2\kappa)/R^2},\label{eq:prob-e-given-Du-eq-d}
\end{align}
and $\P(\Ec)=\int_0^{R}{2r\over R^2}\P(\Ec|D_u=d)dr$ satisfies
\begin{align}
{1-\ex^{-\kappa^2 \bar{n}_{\rm fap}}\over \kappa^2 \bar{n}_{\rm fap}}\leq \P(\Ec)\leq {1-\ex^{-(1-2\kappa)\kappa^2 \bar{n}_{\rm fap}}\over (1-2\kappa)\kappa^2 \bar{n}_{\rm fap}}.\label{eq:bound-on-P-E}
\end{align}
Finally, from \eqref{eq:cond-pdf-Du-Bayes-rule}, \eqref{eq:prob-e-given-Du-eq-d} and \eqref{eq:bound-on-P-E},
\begin{align}
f_{D_u}(d|\Ec)\geq{(1-2\kappa)\kappa^2 \bar{n}_{\rm fap}\ex^{-\bar{n}_{\rm fap}d^2\kappa^2/R^2} \over 1-\ex^{-(1-2\kappa)\kappa^2 \bar{n}_{\rm fap}}}({2d\over R^2}),\label{eq:ub-lb-f-Du-cond-lb}
\end{align}
\begin{align}
f_{D_u}(d|\Ec)\leq{\kappa^2 \bar{n}_{\rm fap}\ex^{-\bar{n}_{\rm fap}d^2(1-2\kappa)\kappa^2/R^2} \over 1-\ex^{-\kappa^2 \bar{n}_{\rm fap}}}({2d\over R^2}).\label{eq:ub-lb-f-Du-cond-ub}
\end{align}
Note that for  $\kappa \ll 1$,  the lower bound and the bound bound in  \eqref{eq:ub-lb-f-Du-cond-lb} and \eqref{eq:ub-lb-f-Du-cond-ub}, respectively,  converge to $2d/R^2$, which corresponds to the uniform distribution over a circle of radius $R$.

\bibliographystyle{unsrt}
\bibliography{myrefs}

\end{document}